\newcommand\fs@boxedtop
   \def\@fs@mid{\vspace\abovecaptionskip\relax}%
   \let\@fs@iftopcapt\iftrue
\renewcommand{\R}{\mathbb{R}}
\DeclareMathOperator{\writeop}{write}
\DeclareMathOperator{\readop}{read}
\DeclareDocumentCommand\Pr{ m g }{%
    \ensuremath{   \IfNoValueTF {#2}
      {\mathbb{P}\left[{#1}\right]}
      {\mathbb{P}\left[{#1}\middle\vert{#2}\right]}%
    }
}
\DeclareDocumentCommand\E{ m g }{%
    \ensuremath{   \IfNoValueTF {#2}
      {\mathbb{E}\left[{#1}\right]}
      {\mathbb{E}\left[{#1}\middle\vert{#2}\right]}%
    }
}
\renewcommand\thmcontinues[1]{Continued}
\newif\ifdraft
\begin{document}
\raggedbottom
\title{Censorship Resistance in On-Chain Auctions}
\author[Fox]{Elijah Fox$^\text{A}$}
\author[Pai]{Mallesh M. Pai$^\text{B,C}$}
\author[Resnick]{Max Resnick$^\text{D}$}
\address{$^\text{\MakeLowercase{A}}$Duality Labs
\href{mailto:elijah@duality.xyz}{elijah@duality.xyz}}
\address{$^\text{\MakeLowercase{B}}$Department of Economics, Rice University %
\href{mailto:mallesh.pai@rice.edu}{mallesh.pai@rice.edu}}
\address{$^\text{\MakeLowercase{C}}$Special Mechanisms Group
\href{mailto:mallesh.pai@specialmechanisms.org}{mallesh.pai@specialmechanisms.org}}
\address{$^\text{\MakeLowercase{D}}$Special Mechanisms Group
\href{mailto:max@specialmechanisms.org}{max@specialmechanisms.org}}

\address{\today}
\thanks{Pai gratefully acknowledges support from the NSF (CCF-1763349).}
\begin{abstract}
Modern blockchains guarantee that submitted transactions will be included eventually; a property formally known as liveness. But financial activity requires transactions to be included in a timely manner. Unfortunately, classical liveness is not strong enough to guarantee this, particularly in the presence of a motivated adversary who benefits from censoring transactions. We define \emph{censorship resistance} as the amount it would cost the adversary to censor a transaction for a fixed interval of time as a function of the associated tip. This definition has two advantages, first it captures the fact that transactions with a higher miner tip can be more costly to censor, and therefore are more likely to swiftly make their way onto the chain. Second, it applies to a finite time window, so it can be used to assess whether a blockchain is capable of hosting financial activity that relies on timely inclusion.  

We apply this definition in the context of auctions. Auctions are a building block for many financial applications, and censoring competing bids offers an easy-to-model motivation for our adversary. Traditional proof-of-stake blockchains have poor enough censorship resistance that it is difficult to retain the integrity of an auction when bids can only be submitted in a single block. As the number of bidders $n$ in a single block auction increases, the probability that the winner is not the adversary, and the economic efficiency of the auction, both decrease faster than $1/n$. Running the auction over multiple blocks, each with a different proposer, alleviates the problem only if the number of blocks grows faster than the number of bidders. We argue that blockchains with more than one concurrent proposer have can have strong censorship resistance. We achieve this by setting up a prisoner's dilemma among the proposers using conditional tips. 

\noindent \textsc{Keywords:} Censorship Resistance, Auctions, Blockchain, MEV


\end{abstract}


\maketitle

\thispagestyle{empty}

\newpage

\pagenumbering{arabic}

\section{Introduction}
Blockchain consensus algorithms typically guarantee \emph{liveness},  meaning valid transactions will be included on chain eventually. But financial applications, are time sensitive. For these  to function as intended, valid transactions must be included on the blockchain in a timely manner. This requires something stronger than liveness: \emph{censorship resistance}. To quote \cite{buterin2015censorship}, censorship resistance is ``ensuring that transactions that people want to put into the blockchain will actually get in in a timely fashion, even if ``the powers that be'', at least on that particular blockchain, would prefer otherwise.'' 

In this paper, we propose a formal definition that quantifies censorship resistance in the sense of \cite{buterin2015censorship} above. We abstract away from the details of the chain and view it as a public bulletin board with two operations: a read operation, which always succeeds, and a write operation. The write operation succeeds when the transaction with an associated tip is added to the bulletin board and fails otherwise. We then define the \emph{censorship resistance} of this public bulletin board as the amount it would cost a motivated adversary to cause a write operation to fail, as a function of the associated tip.

This definition has two advantages that are useful in applications. The first stems from the fact that we define censorship resistance as a function of the underlying tip: this captures how transactions with a higher tip can be more expensive to censor, and therefore are more likely to successfully make it on chain. The second is that by capturing the censorship resistance of a specific public bulletin board, with an associated length of time for a transaction to be added/ considered censored, we can capture the trade-off between censorship resistance an speed for specific blockchain designs. 


Having presented this definition, we apply it to tackle whether a given public bulletin board is sufficiently censorship resistant to host a given mechanism. This is tricky because it depends on both the tips of the underlying transaction(s) \emph{and} the motivation of the adversary, both of which are potentially endogenously determined by the mechanism.  

This paper considers whether existing proof of stake blockchains are sufficiently censorship resistant to host time sensitive auctions. We consider auctions for two reasons. First, the cost of being censored and the benefits of censoring competing bids are easy to quantify in an auction. Therefore, for a given public bulletin board with fixed censorship resistance, we can determine whether the auction will actually function as intended. Second, auctions are already a popular mechanism on-chain: for example, Maker DAO, the entity behind popular stablecoin DAI, uses Dutch clock auctions to sell the right to liquidate collateral for distressed loans, and additionally digital goods such as NFTs may also be auctioned off on-chain. Also, several important future developments will require auctions on-chain, from the very organization of Ethereum (\cite{monnot2022pepc}), to proposals by other large organizations to move current off-chain auctions on-chain (\cite{flashbots_2022}).   

Formally, we consider a seller of a single unit of an indivisible good who runs a second-price auction on-chain. The seller encodes the rules of the second-price auction in a smart contract. The contract selects the bidder who submitted the highest bid over a predefined period and sets the payment to the second-highest bid. But since all of this takes place on a blockchain, before a bid can be submitted to the auction, it must be included in a transaction on-chain. Valid transactions are submitted to a \emph{mempool}. Each slot, the proposer gathers transactions from the mempool into a block that will eventually be added to the chain. Proposers have complete autonomy over which transactions to include. 

The power of the proposer to determine the contents of the block and, therefore, the outcome of the auction sets up a competition for inclusion. Bidders include tips for the proposer along with their bids. The proposer receives these tips if and only if the corresponding transactions are included in his block. We suppose that there is a single colluding bidder who may offer a bribe to the proposer in exchange for omitting certain transactions. In distributed systems terminology, this single colluding bidder who may bribe the proposer if they believe it profitable for them to do so is our \emph{threat model}.

We show that in the auction setting, tips for inclusion are a public good since they provide security to other transactions and only benefit the bidder who pays the tip if they win the auction. In contrast, bribes for omission are purely for private benefit and make other bidders worse off in equilibrium. Consequently, our results suggest that the colluding bidder is highly advantaged in this game. In particular, we show that as the number of honest bidders increases, the colluding bidder wins the auction increasingly often and collects an increasingly large share of the surplus created by the auction. 

Notably, we assume that bids are sealed, that is, that the colluding bidder must choose which transactions to try and omit based solely on the associated tip. We show that the colluding bidder can back out the private bids (and therefore whether it is profitable to attempt to censor these bids) based on these public tips. This suggests that cryptographic approaches (e.g. commit-reveal schemes) are not a silver bullet for resolving censorship concerns in such settings. 

We then consider two alternate designs that improve censorship resistance. The first is to run the auction over multiple slots with a different proposer for each. We find that this achieves sufficient censorship resistance only if the number of blocks grows faster than the number of bidders. This is undesirable for reasons external to our model; for example, executing the auction in a short window is important for financial applications. MEV auctions in particular are concerned about speed, since they need to clear at least once every slot -- once every 12 seconds on Ethereum. 

The second is to have blockchains with multiple concurrent block proposers, $k>1$, and allow bidder tips to be conditioned not only on inclusion, but also on the number of proposers who include the bid within a slot. This allows bidders to set up a sort of ``prisoner's dilemma'' among the proposers by offering to pay a large tip $T$ when only one proposer includes, and a small tip $t \ll T$ if multiple proposers include.  Each proposer is incentivized to include since if they are alone in including the transaction, there is a high tip attached. Therefore, all proposers include the bid in equilibrium. However, censoring is expensive, since censoring requires that each proposer be bribed $T$ for a total cost of $kT.$ This leads to a low expected tip of $kt$ but an asymmetrically expensive censorship cost of $kT \gg kt$. This asymmetry allows for a pooling equilibrium in which the probability of censorship is 0, the tips no longer reveal the bids, and the expected total tips are low.

\section{Related Literature}

Censorship resistance, for various definitions of the term, is a key desideratum motivating the adoption of blockchains. This property was mentioned in some of the earliest writings on the subject, e.g.,  \cite{buterin2015censorship}. More recently, this property has come under additional scrutiny due to two major developments. The first, Proposer Builder Separation (PBS) in Ethereum, explicitly establishes an auction for the right to build the next block. Block builders who win this auction decide which transactions make it onto the chain and, more importantly for our purpose, which transactions do not. PBS therefore enables a motivated adversary to censor specific transaction(s) by purchasing the right to build the next block and intentionally omitting those transactions (\cite{buterin2021pbs}). The second relates to US OFAC sanctions on certain Ethereum addresses, and the subsequent decision by certain block builders to exclude transactions including those addresses from blocks that they build. Effects of this (and a related definition of censorship resistance) are studied in \cite{wahrstatter2023blockchain}. 

The literature on auctions, even restricting to papers that explicitly think about auctions in the context of blockchains, is much larger. In such auctions, bids are rarely announced simultaneously, and maintaining the seal on bids transmitted through public channels requires cryptography. For example, a simple cryptographic second-price sealed-bid auction involves bidders submitting the hash of their bids rather than the bids themselves and then revealing the hash after all bids have been submitted. \cite{ferreira2020credible} showed that, using a cryptographically secure commitment scheme, it is possible to design an auction that is optimal, strategy proof and credible (in the sense of \cite{akbarpour2018credible,akbarpour2020credible}). More complicated cryptographic approaches can eliminate the need to reveal any information beyond the results of the auction and can also accommodate combinatorial auctions (\cite{lee2001efficient,elkind2004interleaving,suzuki2003secure}).

Auctions are commonly cited as use cases for the verifiable computation that smart contracts provide. This was originally envisaged in \cite{szabo1997formalizing}, who noted that ``$\ldots$ a blockchain with a built-in fully fledged Turing-complete programming language that can be used to create ``contracts'' $\ldots$ simply by writing up the logic in a few lines of code'', see also \cite{galal2019verifiable, blass2018strain}. Auctions have also been suggested as a desirable mechanism to decide the order and inclusion of transactions to mitigate MEV (\cite{kulkarni2022towards}). Historically, these were decided by a combination of auction and speed-based mechanisms, leading \cite{daian2020flash} to compare MEV with high-frequency trading as described in \cite{budish2015high}. Initially, inclusion and priority within the block were decided by priority gas auctions (PGAs), since most validators gathered transactions directly from the mempool and ordered them according to their miner tips, breaking ties using a first come first serve rule (\cite{daian2020flash}). But recently, a super-majority of validators have switched their execution clients to MEV-boost compatible versions, meaning the right to decide inclusion and ordering for most blocks is sold to the highest bidder. These bidders are typically established builders who specialize in extracting the maximum value from each block. The leading advocate for this approach has been Flashbots, the company behind the initial open source MEV client. Their next product SUAVE, aims to move these auctions on-chain \cite{flashbots_2022}.

Previous MEV mitigation research has focused on fairness rather than censorship (\cite{kelkar2020order, kelkar2021themis}). But \cite{ferreira2022credible} showed that for every sequencing rule of trades through a liquidity pool, there exists a way for the proposer to obtain non-zero risk-free profits suggesting that ordering based MEV is inevitable with current on chain financial application design. In response to this, researchers have suggested that frequent batch auctions or other order-agnostic mechanisms might alleviate the MEV that arises from transaction ordering power (\cite{johnson2022concave}). 

On-chain auctions have also been studied as a mechanism for the sale of non-fungible tokens (NFTs) \cite{milionis2022framework}. Gradual dutch auctions (GDAs) \cite{frankie_robinson_white_andy8052_2022}, are a dynamic mechanism for selling multiple NFTs. \cite{kulkarni2023credible} explores the credibility of GDAs and finds that an auctioneer can bid to artificially raise the sale price and create the appearance of demand.
\section{Formalizing Censorship Resistance}

As we described above, we abstract away from the details of the blockchain and instead consider solely the functionality as a public bulletin board. For instance, for our leading application of the auction , the public bulletin board can be written to, which is how bids may be submitted, and can be read from, which is how the auction can then be executed. For simplicity, we assume that once a message has been successfully written to , it can be read without friction. For example, any transaction included in an Ethereum block can be read by any full node.

\begin{definition}[Public Bulletin Board]
A Public Bulletin Board has two publicly callable functions:
\begin{enumerate}
    \item $\writeop(m,t)$ takes as input a message $m$ and an inclusion tip $t$ and returns $1$ if the message is successfully written to the bulletin board and $0$ otherwise.
    \item $\readop()$ returns a list of all messages that have been written to the bulletin board over the period. 
\end{enumerate}
\end{definition}

Some subtlety must be observed in the definition of a public bulletin board. First, the $\writeop$ function takes as input not only a message $m$ but also a tip $t$. Here our definition departs from \cite{choudhuri2017fairness}. This models proposer tips and other forms of validator bribes. To motivate this, consider that the write function on Ethereum is unlikely to succeed without a sufficient tip, and so the behavior of $\writeop$ is very different depending on the size of the associated tip.

\theoremstyle{definition}
\newtheorem{myexp}{Example}

\begin{myexp}[Single Block] \label{ex:single}
    In the case of a single block, the $\writeop(m,t)$ operation consists of submitting a transaction $m$ with associated tip $t$. $\writeop(m,t)$ succeeds if the transaction is included in that block on the canonical chain. 
\end{myexp}

\begin{myexp}[Multiple Blocks] \label{ex:multiple}
    In the case of $k$ blocks with rotating proposers, the $writeop(m,t)$ operation consists of submitting a transaction $m$ with associated tip $t$ during the period before the first block is formed. $\writeop(m,t)$ succeeds if the transaction is included in any of the $k$ blocks.
\end{myexp}

We can now model the relationship between the tip $t$ and the success of the write operation as a function. This function provides a flexible definition of the bulletin board's censorship resistance.

\begin{definition}[Censorship Resistance of a Public Bulletin Board]
The censorship resistance of a public bulletin board $\mathcal{D}$ is a mapping $\phi: \R_+ \to \R_+$ that takes as input the tip $t$ corresponding to the tip in the write operation $\writeop(\cdot,t)$ and outputs the minimum cost that a motivated adversary would have to pay to make the $\writeop$ fail. 
\end{definition}

This definition allows us to compare the censorship resistance of two bulletin boards even when the inner machinations of those mechanisms are profoundly different. This definition also easily extends to cases where tips are multi-dimensional. i.e. $t \in \R_+^n$ by substituting $\phi: \R_+ \to \R_+$ to $\phi: \R_+^n \to \R_+$.  

\renewcommand{\thmcontinues}[1]{continued}

\begin{myexp}[continues=ex:single]


The cost to censor this transaction would simply be $t$ in the uncongested case because the motivated adversary has to compensate the proposer at least as much as the proposer would be losing from the tip. So:
\begin{equation}
    \phi(t) = t
\end{equation}
\end{myexp}

\begin{myexp}[continues=ex:multiple]

In the case of $k$ blocks with rotating proposers, the $\writeop(m,t)$ operation consists of submitting a transaction $m$ with associated tip $t$ during the period before the first block is formed. $\writeop(m,t)$ succeeds if the transaction is included in any of the $k$ blocks. The cost to censor this transaction would be $kt$ since each of the $k$ proposers must be bribed at least $t$ to compensate them for the forgone tip on the transaction that they each had the opportunity to include. So:
\begin{equation}
    \phi(t) = kt.
\end{equation}
\end{myexp}

\section{Modelling the Auction}
We are now in a position to use our definitions to understand the censorship resistance of a sealed-bid auction, for various design parameters. 

We consider a traditional independent private values setting. There is a single seller with a single unit of an indivisible good for sale. There are $n+1$ buyers for the good, $n\geq 1$--- we denote the set of bidders by $N= \{0,1,\ldots,n\}$. Each of these buyers $i \in N$ has a private value for the good, $v_i$ . We suppose buyer 0's value $v_0$ is drawn from a distribution with CDF $F_0$  and density $f$, and the other buyers' values are are drawn i.i.d. from a  distribution with CDF $F$ and density $f$. Both distributions have bounded support, we normalize these to be equal to the unit interval $[0,1]$. Several of our results will be for the special case $F=F_0= U[0,1]$. Bidders know their own $v_i$; and $n$, $F,$ and $F_0$ are common knowledge among all bidders and the seller. 

The seller wants to conduct a sealed bid second-price auction with reserve price $r$. As described in this introduction, the point of departure of our model is that this auction runs on a blockchain. Initially, we consider an auction that accepts bids in a single designated block. Below, we formally define this game and our solution concept. 

In an idealized world with honest/ non-strategic proposers, the auction would run as follows: 
\begin{enumerate}
\item The seller announces the auction. 
\item All buyers privately commit their bids to the auction as transactions. 
\item Proposer(s) include these transactions on  relevant block(s).\footnote{We abstract away from issues such as block size constraints/ congestion.} 
\item The second-price auction is computed based on the included bids, i.e., the highest bid is selected to win if this bid $\geq r$, in this case paying a price of $\max\{r, \textrm{other bids}\}$. 
\end{enumerate}

In particular, we assume that the idealized sealed-bid nature of off-chain auctions can be achieved on-chain via cryptographic methods\footnote{This can be practically achieved by submitting the hash of the bid and revealing it later.}. We also assume that the set of bids submitted for the auction is public (the bid itself may be private, but the fact that it exists as a bid is public). 

Our main concern is that bids submitted for this auction may be censored, that is, omitted from a block. More specifically, we suppose that after all other bids are submitted, but before they are revealed, a designated bidder, bidder 0, can pay the proposer of the block to \emph{censor} bids. These censored bids are then excluded from the block and have no impact on the auction. We assume that the proposer is purely profit focused and that bidder's utilities are quasilinear. 

Formally, we consider the following game:
\begin{enumerate}
    \item The seller announces second-price sealed-bid auction with reserve price $r$ to be conducted over a single block. 
    \item Buyers learn their values $v_i \leftarrow F$. 
    \item Buyers $1, \ldots, n$ each simultaneously submit a private bid $b_i$ and a public tip $t_i$. 
    \item Buyer $0$ observes all the other tips $t_i$ and can offer the proposer of the block a take-it-or-leave-it-offer of a subset $S \subseteq \{1,\ldots, n\}$ of bidders and a bribe $p$ to exclude that subset's bids. Bidder $0$ also submits his own bid $b_0$.
    \item The proposer accepts or rejects bidder $0$'s bribe and constructs the block accordingly, either including $N \setminus S$ if he accepts or $N$ otherwise. 
    \item The auction is computed based on the bids included in the block.
\end{enumerate}
In the next section, we consider the case of auctions over multiple sequential blocks with independent proposers and the case of simultaneous proposers. Those games are variants of the game above. We describe them in-line. 

Formally, pure strategies in this game are:
\begin{itemize}
\item For players $i \in \{1,\ldots, n\}$: A tuple of bidding and tipping strategies $\beta_i: [0,1] \rightarrow \Re_+$, $\tau_i: [0,1] \rightarrow \Re_+$ for players $i \in \{1,\ldots n\}$, that is, player $i$ with value $v_i$ bids $\beta_i(v_i)$ and tips $\tau_i(v_i)$. 
\item For player $0$: an offer to the proposer $\theta_0: \Re_+^n \times [0,1] \rightarrow 2^N \times \Re_+$, and a bid function $\beta_0 : 
\Re_+^n \times [0,1] \rightarrow \Re_+$, i.e. as a function of tips $\mathbf{t} = (\tau_1(v_1),\dots,\tau_n(v_n))$ and his own value $v_0$, an offer $\theta_0(\mathbf{t},v_0)$ and a bid $\beta_0(\mathbf{t},v_0)$. 
\item For the proposer, given the tips $\mathbf{t}$ and an offer from player $0$, $\theta_0(\mathbf{t},v_0)$, a choice of which bids to include.
\end{itemize}

Since our game is an extensive-form game of incomplete information, our solution concept is the Perfect Bayes-Nash Equilibrium (PBE). This requires strategies to be mutual best-responses, as is standard in most equilibria. Additionally, for each player, it requires the player to have beliefs about unknowns at every information set (on-, and off-, path) at which they are called upon to play such that their strategy maximizes their expected utility given the beliefs and others' strategies. Beliefs are correct on path (i.e., derived from the prior, and Bayesian updating given agents' strategies), and unrestricted off path.

Note that the proposer has multiple potential indifferences: e.g., should they include a bid with 0 tip? Should they censor a set of bids if bidder $0$'s offered bribe exactly equals the total tip from that set?  We assume that given tips $\mathbf{t}$ from bidders $\{1,\ldots,n\}$ and an offer from bidder $0$ to censor subset $S$ for a bribe of $p$, the proposer includes the bids of $N-S$ if and only if $p \geq \sum_{i \in S} t_i$, and includes bids from all $N$ otherwise (i.e., we break proposer indifferences in favor of bidder 0 so that best responses are well defined). 

There are multiple PBEs of the game, driven in part by the fact that there are multiple equilibria in a second price auction (for instance, there is an equilibrium in the second price auction for one player to bid a high value and all others to bid $0$). However, most of these equilibria are in weakly dominated strategies. We therefore focus on the following class of equilibria:
\begin{enumerate}
    \item Bidders $\{1,\ldots n\}$ submit a truthful bid, i.e. $\beta_i(v_i) = v_i$. Note that this is a weakly dominant strategy for them. In addition, these bidders use a symmetric tipping function $\tau$, that is, $\tau_i(\cdot) = \tau(\cdot)$.
    \item Bidder $0$ bids equal to his value if he believes, given the tips of $\{1,\ldots n\}$, that there is a nonzero probability that he could win the auction, otherwise he bids $0$ or does not bid. 
\end{enumerate}
In what follows, we simply refer to a PBE that satisfies this refinement as an \emph{equilibrium} of the game (with no qualifier). We reiterate that there are multiple PBEs of the original game; we are simply restricting attention to these ``reasonable'' equilibria for tractability. 

\section{Results}
Our results are easiest for the case that n = 1, that is, there are two bidders. We present this as an illustration before considering the general case.

\subsection{Two Bidder Case}
Suppose there are only two bidders, one ``honest'' bidder 1 with value drawn according to distribution $F$, and one ``colluding'' bidder who has the opportunity to collude with the proposer, bidder 0, with value drawn independently from distribution $F_0$. We assume that $F_0$ satisfies a regularity condition, that is, that $F_0(t)/f_0(t)$ is non-decreasing in $t$. 

The equilibrium in this case is easy to describe:
\begin{proposition}\label{prop:eqbm1}
The following constitutes an equilibrium of the game with 2 bidders, i.e. $N= \{0,1\}$, when the seller announces a second-price auction with a reserve price $r=0$: 
\begin{itemize}
    \item Bidder 1 submits a truthful bid, and his tipping strategy as function of his value $v_1$ is given by $t_1(v_1)$ solves $(v_1 - t) -\frac{F_0(t)}{f_0(t)} =0 $. 
    \item Bidder $0$'s strategy as function of the observed tip $t_1$ and his value $v_0$ is given by 
    \begin{equation*}
    \sigma_0(t_1,v_0) = \begin{cases}
        \text{bribe} & t_1 \leq v_0,\\
        \text{don't bribe} & t_1 > v_0.\\
    \end{cases}
\end{equation*}
where bribe is shorthand for paying $t_1$ to the proposer in exchange for omitting bidder 1's transaction. Further bidder $0$ submits a nonzero bid in the auction if and only if he bribes the proposer.
\item The proposer accepts bidder $0$'s bribe whenever it is offered and omits bidder $1$'s bid, otherwise the proposer includes both bids. 
\end{itemize}
\end{proposition}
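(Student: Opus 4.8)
The plan is to verify that the proposed profile is a Perfect Bayes--Nash Equilibrium by checking, via backward induction through the game tree, that each player's prescribed action is a best response given the others' strategies, and then confirming belief consistency. I would proceed in the order proposer, then bidder $0$, then bidder $1$. The proposer is immediate: facing an offered bribe $p$ to censor $\{1\}$ against the inclusion tip $t_1$, a purely profit-focused proposer earns $p$ by accepting and $t_1$ by rejecting, so accepting is optimal exactly when $p \geq t_1$; with the stated tie-breaking rule this reduces to ``accept whenever the bribe is offered,'' since bidder $0$ offers precisely $t_1$.

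The core of the argument is bidder $0$'s best response. Taking the proposer's rule as fixed, I would compute the two relevant continuation payoffs for a bidder $0$ of value $v_0$ who observes tip $t_1$. If he bribes, he pays $t_1$, bidder $1$ is excluded, and bidder $0$ is the sole bidder in a second-price auction with reserve $0$; he therefore wins at price $0$, for net payoff $v_0 - t_1$. If he does not bribe, both bids are included and, bidding optimally against bidder $1$'s truthful bid $v_1$, he obtains $\max\{v_0 - v_1,\, 0\}$. The key simplification is that bidder $1$'s equilibrium tip satisfies $\tau(v_1) < v_1$ (established below from the first-order condition), so $v_0 - t_1 > v_0 - v_1$; hence the comparison collapses to bribing whenever $v_0 - t_1 \geq 0$, i.e.\ whenever $t_1 \leq v_0$, which is exactly the prescribed cutoff. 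The same inequality shows that in the no-bribe region one has $v_0 < v_1$, so bidder $0$ loses regardless and a zero bid is a best response, justifying ``submit a nonzero bid if and only if he bribes.''

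Next I would turn to bidder $1$. Truthful bidding is weakly dominant and is imposed by the refinement, so only the tip is at issue. Holding bidder $0$'s cutoff rule fixed, a tip $t$ induces a bribe exactly when $v_0 \geq t$, so bidder $1$ is censored with probability $1 - F_0(t)$ and is otherwise included, winning at price $0$ while paying the tip; his expected payoff is thus $U_1(t) = F_0(t)\,(v_1 - t)$. Differentiating and dividing by $f_0(t)$ yields the stated condition $(v_1 - t) - F_0(t)/f_0(t) = 0$. I would then invoke the regularity assumption: since $F_0/f_0$ is nondecreasing and $v_1 - t$ is strictly decreasing, $U_1'$ changes sign once, from positive to negative, so the first-order condition pins down the global maximizer. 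The same expression gives $\tau(v_1) = v_1 - F_0(\tau)/f_0(\tau) < v_1$, closing the gap used above, and an implicit-function argument shows $\tau$ is strictly increasing.

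Finally I would address beliefs. On path, strict monotonicity of $\tau$ makes the observed tip fully reveal $v_1$ through $v_1 = \tau^{-1}(t_1)$, so bidder $0$'s posterior is a point mass consistent with Bayesian updating; off-path tips may be assigned any belief without disturbing optimality, since bidder $0$'s bribe payoff $v_0 - t_1$ is independent of $v_1$ and the cutoff remains a best response. The step I expect to be the main obstacle is bidder $0$'s best response: one must correctly account for the reserve-price-$0$ windfall when bidder $1$ is censored, and verify that the single cutoff $t_1 \leq v_0$ simultaneously handles both the $v_0 \geq v_1$ and $v_0 < v_1$ cases. This works only because the equilibrium tip is strictly below value, so the argument must be sequenced to establish $\tau(v_1) < v_1$ before the bidder $0$ comparison is finalized.
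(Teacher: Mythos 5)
Your proposal is correct and follows essentially the same route as the paper: backward induction, the payoff $U_1(t)=F_0(t)(v_1-t)$ for bidder 1, the first-order condition $(v_1-t)-F_0(t)/f_0(t)=0$ with the regularity condition guaranteeing a unique, strictly increasing tipping function, and bidder 0's cutoff justified by the fact that the equilibrium tip lies strictly below the value. Your treatment of bidder 0's best response (explicitly comparing $v_0-t_1$ against $\max\{v_0-v_1,0\}$) and of off-path beliefs is somewhat more detailed than the paper's, but it is the same argument.
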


Before providing a proof of this result, the following corollary summarizes the outcome that results in this equilibrium when $F = F_0 =\text{Uniform}[0,1]$. For comparison, recall that in the (standard) second price auction when both buyers have values drawn i.i.d. from $\text{Uniform}[0,1],$ the expected revenue is $1/3$ and each bidder has an \textit{ex ante} expected surplus of $1/6$. 

\begin{corollary}\label{cor:outcome1}
Bidder $0$ wins the object with probability $\tfrac34$, and has an expected surplus of $\frac{13}{48}$, while bidder $1$ wins the auction with probability $\frac14$ and has an expected surplus of $\frac{1}{12}$.  Revenue for the seller in this auction is $0$, and the expected tip revenue for the proposer is $\tfrac14$. 
\end{corollary}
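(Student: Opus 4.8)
The plan is to substitute the uniform specification into the equilibrium of Proposition~\ref{prop:eqbm1} and compute the four reported quantities by direct integration over the square $(v_0,v_1)\in[0,1]^2$. Since $F_0=U[0,1]$ gives $F_0(t)/f_0(t)=t$, bidder~1's equation $(v_1-t)-t=0$ yields the explicit tip $t_1(v_1)=v_1/2$; substituting this into bidder~0's rule, he bribes---and hence wins---exactly on $\{t_1\le v_0\}=\{v_1\le 2v_0\}$, while bidder~1 wins on the complement $\{v_1>2v_0\}$. Bidder~0 therefore wins with probability
\[ \Pr{v_1\le 2v_0}=\int_0^1\min\{2v_0,1\}\,\dd v_0=\int_0^{1/2}2v_0\,\dd v_0+\int_{1/2}^1\dd v_0=\tfrac34, \]
and bidder~1 with the complementary probability $\tfrac14$.

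Next I would pin down the transfers, which is where the accounting must be done with care. When bidder~0 censors, bidder~1's transaction is omitted, so bidder~1 pays no tip and, by the indifference-breaking convention, bidder~0's bribe equals exactly $t_1=v_1/2$; when bidder~0 abstains, bidder~1 is included and pays $t_1=v_1/2$ while bidder~0 bids $0$ and pays nothing. Thus the proposer collects precisely $v_1/2$ in either branch, for expected tip revenue $\tfrac12\E{v_1}=\tfrac14$. For the seller, in the censoring branch the lone surviving bid meets only the reserve $r=0$, and in the abstaining branch bidder~1's second price is bidder~0's bid of $0$; the auction clears at $0$ in both cases, so seller revenue is $0$.

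The surpluses then follow by integrating net payoffs over the two regions. Bidder~1's surplus is $v_1-0-t_1=v_1/2$ on $\{v_1>2v_0\}$, so his expected surplus is $\int_0^1\int_0^{v_1/2}\tfrac{v_1}{2}\,\dd v_0\,\dd v_1=\tfrac1{12}$. Bidder~0's surplus is $v_0-t_1=v_0-v_1/2$ on $\{v_1\le 2v_0\}$, giving $\int_0^1\int_0^{\min\{2v_0,1\}}(v_0-v_1/2)\,\dd v_1\,\dd v_0$, which I would evaluate by splitting the outer integral at $v_0=1/2$ (equivalently, reversing the order and integrating $v_0\in[v_1/2,1]$ first). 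As a cross-check on these figures, the realized value of the good, $\E{v_0\ind{v_1\le 2v_0}}+\E{v_1\ind{v_1>2v_0}}$, must equal the sum of the two surpluses and the proposer's and seller's revenues, since every bribe, tip, and auction payment is a pure transfer.

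The only genuine obstacle is bookkeeping rather than analysis: one must correctly track that a censored bid pays no tip, that the equilibrium bribe is exactly $t_1$, and that both winning branches clear at the reserve, and then handle the piecewise truncation at $v_0=1/2$ where $2v_0$ crosses $1$. Once these are in place, every quantity reduces to an elementary polynomial integral.
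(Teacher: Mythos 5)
Your setup is the right one---and, as far as I can tell, the only one available, since the paper states this corollary without a separate proof: specialize Proposition \ref{prop:eqbm1} to $F_0=U[0,1]$ to get $t_1(v_1)=v_1/2$, partition the unit square by the bribe region $\{v_1\le 2v_0\}$, and integrate. Every quantity you actually evaluate (win probabilities $3/4$ and $1/4$, bidder $1$'s surplus $1/12$, proposer revenue $1/4$, seller revenue $0$) is correct and matches the statement, and your accounting conventions (a censored bid pays no tip, the equilibrium bribe equals $t_1$ exactly, both winning branches clear at the reserve of $0$) are the right reading of the model. The gap is that you stop short of evaluating the one integral that matters most, bidder $0$'s surplus, and that is precisely where the statement fails to check out. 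Carrying out your own split at $v_0=1/2$ gives
\begin{align*}
\int_0^{1/2}\!\int_0^{2v_0}\left(v_0-\tfrac{v_1}{2}\right)dv_1\,dv_0
+\int_{1/2}^{1}\!\int_0^{1}\left(v_0-\tfrac{v_1}{2}\right)dv_1\,dv_0
&=\int_0^{1/2}v_0^2\,dv_0+\int_{1/2}^{1}\left(v_0-\tfrac14\right)dv_0\\
&=\tfrac{1}{24}+\tfrac14=\tfrac{7}{24}=\tfrac{14}{48},
\end{align*}
not the $\tfrac{13}{48}$ asserted in the corollary.

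The conservation-of-value cross-check you propose detects the same problem, so you should actually run it rather than merely mention it: the realized value of the good is $\mathbb{E}\bigl[v_0\,\ind{v_1\le 2v_0}\bigr]+\mathbb{E}\bigl[v_1\,\ind{v_1>2v_0}\bigr]=\tfrac{11}{24}+\tfrac16=\tfrac{30}{48}$, while the corollary's four numbers sum to $\tfrac{13}{48}+\tfrac{4}{48}+\tfrac{12}{48}+0=\tfrac{29}{48}$, leaving an unaccounted $\tfrac{1}{48}$. Since all payments in the model are transfers among the four parties, the four reported figures cannot all be right; given that the other three are confirmed by direct computation, bidder $0$'s expected surplus must be $\tfrac{7}{24}$, and the $\tfrac{13}{48}$ in the statement appears to be an arithmetic slip. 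To complete your proof you should finish the integral, report $\tfrac{7}{24}$, and note the discrepancy explicitly---or, if you believe the stated figure, identify a payment of expected size $\tfrac{1}{48}$ borne by bidder $0$ that the model imposes and your bookkeeping omits; I do not see one.
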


In short, the proposer collects all the revenue from this auction, while the seller collects none. Bidder $0$ is substantially advantaged by his ability to see bidder $1$'s tip and then decide whether to bribe the proposer or not (wins the auction with higher probability, collects more of the surplus). Our results for $n>1$ are similarly stark except for the fact that the auctioneer collects some positive revenue when $n >1$, although this revenue rapidly decreases as $n$ increases. 

\begin{proof} The proof of the proposition is straightforward so we describe it briefly in line. First to see that bidder $1$ should bid his value (our refinement restricts attention to these) note that bidder 1 has two actions, he privately submits a bid $b_1$ and publicly submits a tip $t$.  Since bids only matter after inclusion has been decided, which is also after tips have been paid, tips are a sunk cost and what remains is simply a second price auction, in which truthful bidding is a weakly dominant strategy. Therefore it is (part of) an equilibrium for bidder $1$ to bid his value.


Notice that bidder 1 does not benefit from submitting a tip $t > v_1$ since even if he wins, he will end up paying more in tips (in addition to possible fees from the auction) than he values the item. Knowing this, it is always weakly better for player 0 to pay $t$ to omit player 1's bid when $v_0 \geq t$. Thus player 0 bribes the proposer if and only if $v_0 \geq t$. In that case player $1$'s expected utility as a function of his tip is:
\begin{align*}
    \mathbb{E}[U_1(v_1,t)] &= F_0(t) \left(v_1 - t\right).
\end{align*}
Taking the derivative with respect to $t$, and setting it equal to 0, we get, as desired, that
\begin{align*}
    t_1(v_1) \; \text{solves} \; (v_1 - t) -\frac{F_0(t)}{f_0(t)} =0 .
\end{align*}
Now that we have found a candidate equilibrium, we have some more work to do to verify that it is in fact a PBE. Formally, bidder 1's beliefs at his only information set are that $v_0 \sim \text{Uniform}[0,1]$. By our regularity condition, $t_1(\cdot)$ is strictly increasing. 
Bidder 0's beliefs, conditional on bidder 1's tip being $t_1$ are given by
\begin{equation*}
    v_1 = \begin{cases}
        t_1^{-1}(t_1) & t_1 \leq t_1(1),\\
        1 & \text{ otherwise.}
    \end{cases}
\end{equation*}
Notice that the case of $t_1 > t_1(1)$ is off the equilibrium path. Finally note that Bidder $0$'s strategy to bribe whenever his value exceeds bidder $1$'s tip, and to bid only if he is willing to bribe, constitutes a best response. This concludes the proof. 
\end{proof}

Notice that even though bids are completely private in this model, because of the transaction inclusion micro-structure, bids are effectively revealed by the tips attached to them. This calls into question whether we can conduct sealed bid auctions of any type on chain.  

We can also describe the equilibrium for the case where the seller chooses an auction with a reserve price $r>0$. For brevity we describe this informally:
\begin{align*}
&t_1(v_1)\; \text{solves}\;  (v_1 - r - t) f_0(r+t) - F_0(r+t) =0 \text{ if solution exists,}\\
&t_1(v_1) = 0\; \text{otherwise}. 
\end{align*}
Our regularity condition implies that there exists $\underline{v} = r + \frac{F_0(r)}{f_0(r)}> r$ such that $t_1(v_1) = 0$ for $v \leq \underline{v}$ and strictly increasing for $v> \underline{v}$. Bidder $0$'s strategy is to bribe and submit a bid only if his value $v_0 > t_1 + r$ where $t_1$ is the observed tip.

\begin{proposition}\label{prop:eqbm1r}
The following constitutes an equilibrium of the game with 2 bidders, i.e., $N= \{0,1\}$, when the seller announces a second-price auction with a reserve price $r>0$: 
\begin{itemize}
    \item Bidder 1 submits a truthful bid, and his tipping strategy as function of his value $v_1$ is given by $t_1(v_1)= v_1/2 -r$ whenever $v_1 > 2r$ and $0$ otherwise.
    \item Bidder $0$'s strategy as function of the observed tip $t_1$ and his value $v_0$ is given by 
    \begin{equation*}
    \sigma_0(t_1,v_0) = \begin{cases}
        \text{bribe} & t_1 + r\leq v_0\\
        \text{don't bribe} & \text{o.w.} \\
    \end{cases}
\end{equation*}
where bribe is shorthand for paying $t_1$ to the proposer in exchange for omitting bidder 1's transaction. Further bidder $0$ submits a non-zero bid in the auction if and only if he bribes the proposer.
\item The proposer accepts bidder $0$'s bribe whenever it is offered and omits bidder $1$'s bid, otherwise the proposer includes both bids. 
\end{itemize}
\end{proposition}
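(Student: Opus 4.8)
The plan is to follow the same logic as the proof of Proposition~\ref{prop:eqbm1}: verify that the stated profile is a PBE by checking each player's best response and then specifying consistent beliefs. The only new feature relative to the $r=0$ case is that winning a censored auction now costs the reserve $r$, which shifts every threshold by $r$; the closed forms $t_1(v_1)=v_1/2-r$ and the cutoff $v_0 \ge t_1+r$ are exactly the $F_0=U[0,1]$ specialization of the general equilibrium equations stated informally just before the proposition, so part of the work is simply to record that specialization.

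First I would pin down bidder~1. As in the $r=0$ proof, bids are chosen after inclusion and tips are sunk, so truthful bidding $\beta_1(v_1)=v_1$ is weakly dominant and is part of the equilibrium by our refinement. For the tip, take bidder~0's cutoff strategy as given: bidder~0 bribes precisely when $v_0 \ge r+t$, so bidder~1's transaction survives with probability $F_0(r+t)$, and conditional on survival bidder~0 does not bid, so bidder~1 wins alone at the reserve. Hence
\begin{equation*}
\E{U_1(v_1,t)} = F_0(r+t)\,(v_1 - r - t).
\end{equation*}
Differentiating gives the first-order condition $(v_1-r-t)f_0(r+t)-F_0(r+t)=0$, which for $F_0=U[0,1]$ reads $v_1-2r-2t=0$, i.e. $t_1(v_1)=v_1/2-r$. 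This is nonnegative exactly when $v_1>2r$; for $v_1\le 2r$ the constrained optimum is the corner $t=0$. The regularity condition (trivially satisfied by the uniform) makes $t_1(\cdot)$ strictly increasing on $\{v_1>2r\}$ and yields the threshold $\underline v = r + F_0(r)/f_0(r)=2r$.

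The crux is bidder~0's best response, and here I would argue case by case, using that the strictly increasing tip lets bidder~0 invert $t_1$ to $v_1=2(t_1+r)$ on path. Bidder~0 has three relevant options: bribe and win alone at the reserve for a net payoff $v_0-r-t_1$; decline to bribe and bid, winning only if $v_0>v_1=2t_1+2r$ at price $v_1$; or stay out for $0$. Comparing the first two: whenever bidder~0 would win honestly, bribing is strictly cheaper because $r+t_1 < 2r+2t_1 = v_1$, and whenever he would lose honestly, bribing still yields $v_0-r-t_1$, which is profitable exactly when $v_0\ge r+t_1$. Thus bribing dominates honest bidding, bidder~0 bribes iff $v_0\ge r+t_1$, matching $\sigma_0$, and when he does not bribe he cannot profitably bid (since $v_0<r+t_1<v_1$), so ``bid only when bribing'' is optimal. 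I expect this comparison --- in particular establishing that the reserve-plus-bribe price undercuts the competitive price for every on-path tip --- to be the main obstacle, as it is what rules out deviations to honest competition.

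Finally I would assemble the PBE. Bidder~0's on-path beliefs invert the tip, $v_1 = t_1^{-1}(t_1)$ for $t_1\le t_1(1)$, with beliefs for $t_1>t_1(1)$ off path; the pooling region $v_1\le 2r$ maps to the single tip $t_1=0$, and I would check that bidder~0's cutoff remains a best response there despite the lack of separation --- at $t_1=0$ bribing yields $\max\{0,v_0-r\}$ while any honest bid yields at most $v_0-\max\{r,v_1\}\le v_0-r$, so bribing still dominates and the cutoff $v_0\ge r$ is optimal regardless of bidder~0's belief over $[0,2r]$. Bidder~1's best response was verified above, the proposer's acceptance rule is optimal by the tie-breaking convention, and beliefs are Bayes-consistent on path; this establishes the profile as an equilibrium.
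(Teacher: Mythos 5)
Your derivation is correct and follows the same route the paper takes: the paper gives only the informal first-order condition $(v_1-r-t)f_0(r+t)-F_0(r+t)=0$ with threshold $\underline v=r+F_0(r)/f_0(r)$ and omits a formal proof, and your argument is exactly that derivation specialized to $F_0=U[0,1]$, plus the (omitted but routine) verification that bribing at cost $r+t_1$ undercuts honest competition at price $v_1=2(t_1+r)$. Nothing further is needed.
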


Note that while the seller does receive positive revenue in this case, they do not realize any ``benefit'' from running an auction relative to  posting a ``buy it now'' price of $r$. We formalize this in the following corollary:
\begin{corollary}
    Suppose buyer values are i.i.d. $ \text{Uniform}[0,1]$. Assuming $r \leq \frac12$, Bidder $0$ wins the object with ex-ante probability $(1-r)(1- (\frac12 -r)^2)$, while bidder $1$ wins the object with ex-ante probability $(1-2r)(\frac{r}{2} + \frac14)$. The expected revenue of the seller is $r (1-r^2)$, which is the same as the revenue for posting a ``buy it now'' price of $r$. The proposer makes an expected revenue of $\frac14 (1-2r)^2$.  
\end{corollary}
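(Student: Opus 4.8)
The plan is to turn the corollary's four ex-ante quantities into integrals over the realized pair $(v_0,v_1)\in[0,1]^2$. By Proposition~\ref{prop:eqbm1r}, the draw of values pins down all play: bidder~1 tips $t_1(v_1)=\max\{v_1/2-r,\,0\}$ and bids truthfully, while bidder~0 bribes and bids $v_0$ exactly when $v_0\ge t_1(v_1)+r$ and otherwise abstains. Since $v_0,v_1$ are independent $\mathrm{Uniform}[0,1]$, each claimed probability or revenue is just the integral of an indicator or a payment against Lebesgue measure on the unit square.

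First I would compress the bribe rule into a single threshold $g(v_1):=t_1(v_1)+r=\max\{v_1/2,\,r\}$: bidder~0 bribes iff $v_0\ge g(v_1)$, and when he does he is the only remaining bid, clears the reserve (as $v_0\ge g(v_1)\ge r$), and pays $r$ to the seller and $t_1(v_1)$ to the proposer. This partitions the square into (i) $\{v_0\ge g(v_1)\}$, where bidder~0 wins; (ii) $\{v_0<g(v_1),\ v_1\ge r\}$, where bidder~0 stays out and bidder~1 wins at price $r$; and (iii) $\{v_0<r,\ v_1<r\}$, where no bid clears the reserve and there is no sale. The winning probabilities are then
\[
\Pr{\text{bidder }0\text{ wins}}=\int_0^1\bigl(1-g(v_1)\bigr)\,\dd v_1,\qquad \Pr{\text{bidder }1\text{ wins}}=\int_r^1 g(v_1)\,\dd v_1,
\]
each evaluated by splitting at the kink $v_1=2r$, where $g$ switches from the constant $r$ to the line $v_1/2$.

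The two revenue figures I would read off almost immediately. The seller collects exactly $r$ on every sale, and by region~(iii) a sale fails only when both values lie below $r$, so the seller's revenue is $r\,\Pr{\max\{v_0,v_1\}\ge r}=r(1-r^2)$, manifestly the payoff of a posted price $r$. The proposer, by the indifference-breaking convention, receives $t_1(v_1)$ in every contingency---as an inclusion tip when bidder~1 is on the block and as the bribe $p=t_1(v_1)$ when bidder~1 is censored---so the proposer's expected revenue is $\E{t_1(v_1)}=\int_{2r}^1\!\bigl(v/2-r\bigr)\,\dd v=\tfrac14(1-2r)^2$.

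The delicate step, and the one I would treat most carefully, is the pooling region $v_1\in[0,2r]$ where $t_1=0$. There the bribe is costless, so the frontier between ``bidder~0 wins'' and ``bidder~1 wins'' sits at $v_0=r$ rather than at $v_0=v_1/2$; in particular the sub-rectangle $\{r\le v_1\le 2r,\ v_0<r\}$ (of mass $r^2$) is won by bidder~1 even though she tips nothing, and it is easy to mis-allocate it to ``no sale'' or to bidder~0. To guard against this I would compute each winning probability as the explicit sum of its two pieces across $v_1=2r$ and then check the triple against the two identities it must satisfy: $\Pr{\text{bidder }0\text{ wins}}+\Pr{\text{bidder }1\text{ wins}}+r^2=1$, and $r\bigl(\Pr{\text{bidder }0\text{ wins}}+\Pr{\text{bidder }1\text{ wins}}\bigr)=r(1-r^2)$, i.e.\ the already-derived seller revenue. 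Only after these anchors agree would I collect terms into the stated closed forms.
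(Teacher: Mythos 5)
Your decomposition of the unit square is the right one, and the paper supplies no proof of this corollary to compare against, so the calculation has to stand on its own. Your two revenue claims do check out: the seller collects $r$ exactly off the complement of the no-sale event $\{v_0<r,\,v_1<r\}$ of mass $r^2$, giving $r(1-r^2)$, and the proposer receives $t_1(v_1)$ in every contingency, giving $\int_{2r}^1(v/2-r)\,\dd v=\tfrac14(1-2r)^2$. The trouble is the final step where you plan to ``collect terms into the stated closed forms'': the integrals you set up do not produce them. With $g(v_1)=\max\{v_1/2,r\}$ one gets
\[
\int_0^1\bigl(1-g(v_1)\bigr)\,\dd v_1=\tfrac34-r^2,
\qquad
\int_r^1 g(v_1)\,\dd v_1=r^2+\Bigl(\tfrac14-r^2\Bigr)=\tfrac14,
\]
whereas the corollary asserts $(1-r)\bigl(1-(\tfrac12-r)^2\bigr)=\tfrac34+\tfrac r4-2r^2+r^3$ and $(1-2r)\bigl(\tfrac r2+\tfrac14\bigr)=\tfrac14-r^2$. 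The stated pair in fact fails precisely the anchor identity you propose: it sums to $1+\tfrac r4-3r^2+r^3$ rather than $1-r^2$, so it cannot coexist with the seller-revenue figure $r(1-r^2)$ asserted in the same sentence.

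I see no flaw in your region assignment. In particular you are right to award the pooling sub-rectangle $\{r\le v_1\le 2r,\ v_0<r\}$ (mass $r^2$) to bidder $1$: there $t_1=0$, and bidder $0$'s strategy in Proposition \ref{prop:eqbm1r} is not to bribe when $v_0<t_1+r=r$. It is suggestive that the corollary's bidder-$1$ expression $\tfrac14-r^2$ is exactly your answer minus that rectangle, which is the mis-allocation you explicitly warn against. So your method, carried through, proves $\tfrac34-r^2$ and $\tfrac14$ for the two winning probabilities (together with the two revenue figures as stated), not the displayed probability expressions; as written, the proposal does not establish the statement, and you should either reconcile the discrepancy or flag the statement's probabilities as in error before declaring the proof complete.
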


Again bidder $0$ has a strong advantage in this auction. Tips are no longer perfectly revealing, since a non-empty interval of bidder values tip $0$, but remain weakly monotone in bid and perfectly revealing when strictly positive.

\subsection{Three or more bidders}

\begin{figure}%
    \centering
    \subfloat{{\includegraphics[width=6.5cm]{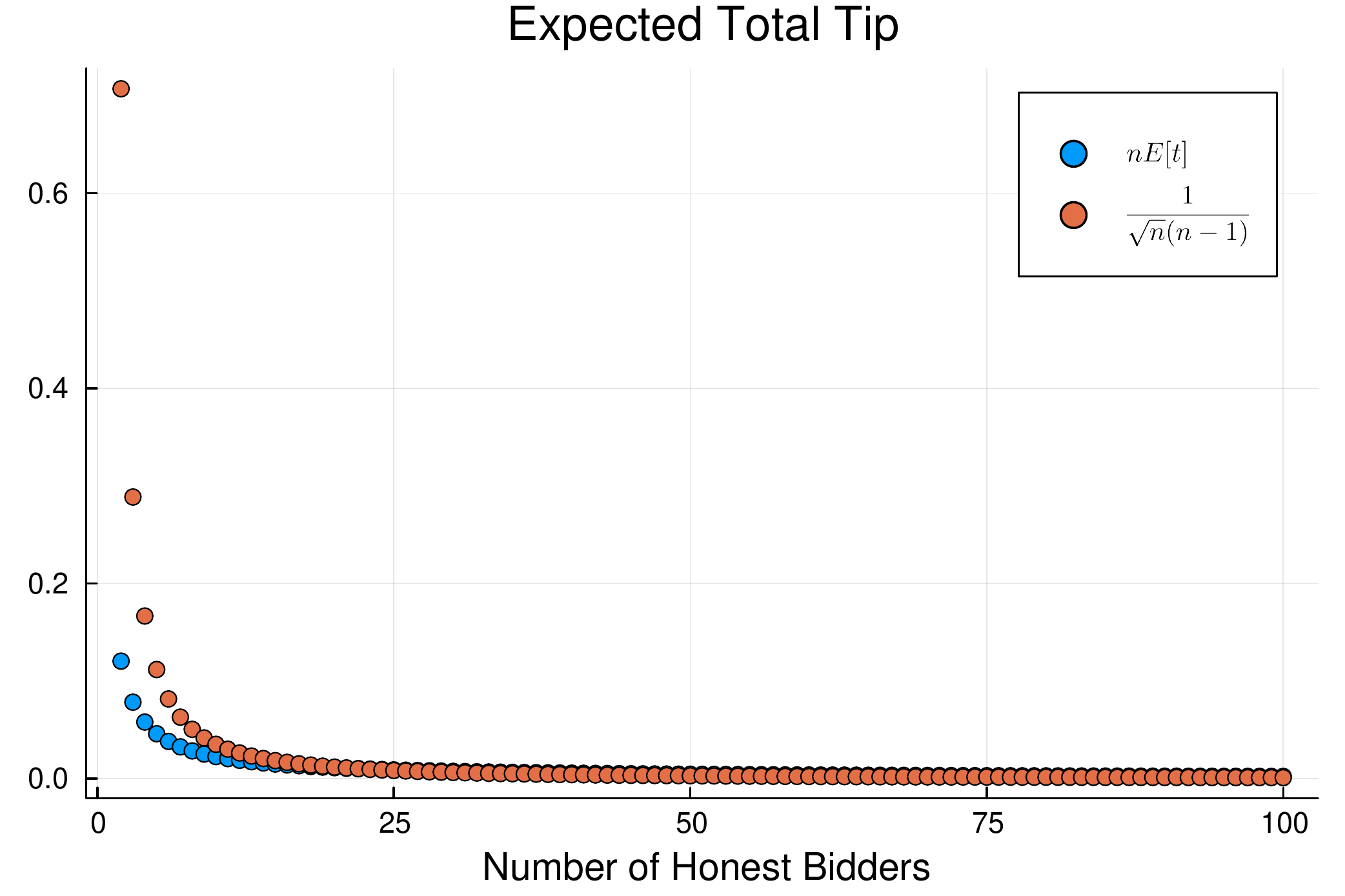}}}%
    \subfloat{{\includegraphics[width=6.5cm]{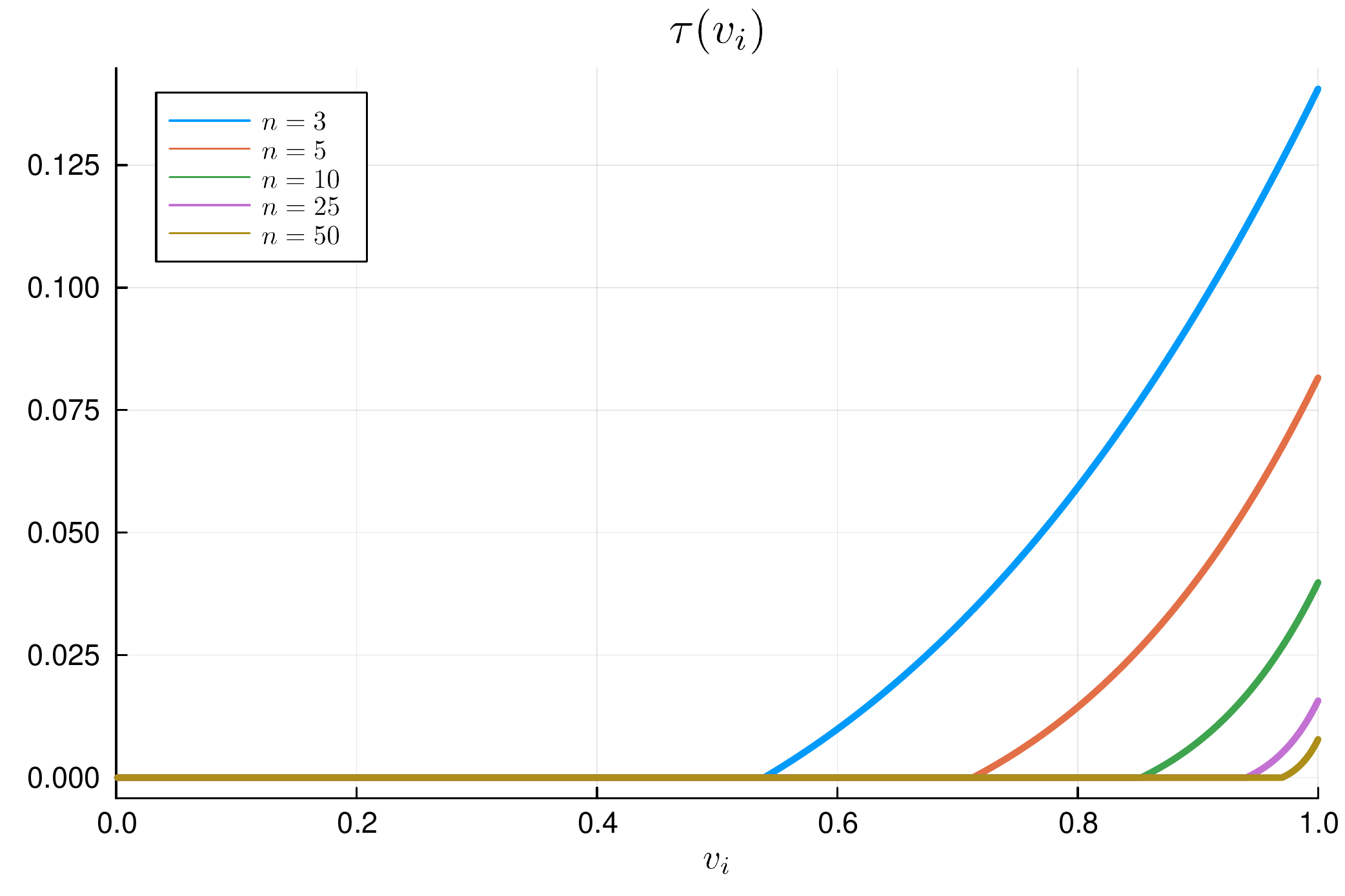} }}%
    \caption{Expected Total Tips and Tipping functions for $F = F_0 \sim \text{Uniform}[0,1]$}%
    \label{fig:Uniform_fig}%
\end{figure}

We now turn to the case where $n\geq 2,$ i.e., $N = \{0,1,\ldots,n\}$. 

At first the problem of finding an equilibrium may appear intractable since bidder $0$'s best-response problem is itself complicated: there are $2^n$ possible subsets of $\{1,\ldots,n\}$ and the problem of finding the best subset to buy out is therefore non-trivial. The tipping function of bidders $\{1,\ldots,n\}$ needs to be a best response to this (accounting for how changing their tip changes the probability that they are censored given a distribution of other tips). Nevertheless, there is an easy to describe equilibrium. To construct it, the following lemma is useful:

\begin{lemma}\label{lem:tips}
Suppose that the tipping strategy of buyers $1 \ldots n$ is such that $t(v) \leq v/n$. Then we have that the best response for bidder $0$, as a function of his own value $v_0$ and observed vector of tips $\mathbf{t} = (t_1 \ldots t_n)$ can be described as:
\begin{equation*}
    \sigma_0 (v_0,t_1,\dots,t_n) = \begin{cases}
    \text{bribe}, & \sum_{i=1}^n t_i \leq v_0\\
    \text{don't bribe}, & \sum_{i=1}^n t_i > v_0 
    \end{cases}
\end{equation*}
where bribe is shorthand for paying the proposer $\sum_i t_i$ in exchange for omitting all of bidder 1 through n's transactions. 
\end{lemma}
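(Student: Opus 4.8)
The plan is to reduce bidder $0$'s apparently combinatorial best-response problem --- a choice among all $2^n$ subsets to censor --- to a comparison between just two options, censoring everyone versus censoring no one, by proving that full censorship weakly dominates every partial censorship. Since the honest bidders bid truthfully, if bidder $0$ censors a set $S$ and then wins he pays the highest surviving bid $\max_{i\notin S} v_i$ together with the bribe $\sum_{i\in S} t_i$, for a net payoff $v_0 - \max_{i\notin S} v_i - \sum_{i\in S} t_i$; censoring everyone yields $v_0 - \sum_{i=1}^n t_i$ (he then faces no competition and pays the reserve, normalized to $0$); and declining to bribe while losing yields $0$. I would first record these payoffs and observe that any action in which bidder $0$ pays a bribe but fails to win is dominated, so it suffices to compare the winning payoffs above with the outside option of not participating.

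The crux is the inequality $\max_{i\notin S} v_i \geq \sum_{i\notin S} t_i$, and this is exactly where the hypothesis $t(v)\leq v/n$ enters. From $t_i = t(v_i)\leq v_i/n$ I obtain $\sum_{i\notin S} t_i \leq \tfrac1n\sum_{i\notin S} v_i \leq \tfrac{|N\setminus S|}{n}\max_{i\notin S} v_i \leq \max_{i\notin S} v_i$, using $|N\setminus S|\leq n$ and that the maximum dominates the average. Subtracting the partial-censorship winning payoff from the full-censorship payoff gives precisely $\max_{i\notin S} v_i - \sum_{i\notin S} t_i \geq 0$, so censoring everyone weakly dominates censoring any proper subset $S$ (including $S=\emptyset$, i.e.\ not censoring at all).

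With this dominance in hand the result follows from a short case split on the sign of the full-censorship payoff $v_0 - \sum_i t_i$. If $\sum_i t_i \leq v_0$ this payoff is nonnegative, so it beats both the outside option and every (weakly dominated) partial option, and bidder $0$ censors everyone. If $\sum_i t_i > v_0$ the full-censorship payoff is negative; every partial-and-win payoff is no larger and hence also negative, every partial-and-lose payoff equals $-\sum_{i\in S} t_i \leq 0$, and --- crucially --- bidder $0$ cannot win without bribing, since $\max_i v_i \geq n\max_i t_i \geq \sum_i t_i > v_0$ forces his value strictly below the top surviving bid whenever anyone is left in. Hence his best option is to decline the bribe, for a payoff of $0$.

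I expect the dominance inequality to be the single load-bearing step; the rest is bookkeeping. One point I would be careful about is that bidder $0$ observes only the tips, not the values, so strictly speaking he maximizes expected utility under his posterior over $(v_1,\dots,v_n)$. This causes no trouble here because the whole argument uses only the pointwise lower bound $v_i \geq n t_i$ implied by the tipping hypothesis, while the full-censorship payoff $v_0-\sum_i t_i$ depends on the observed tips alone; the dominance therefore holds realization by realization and survives taking expectations, so no invertibility of the tipping function is required. I would also invoke the stated tie-breaking convention (the proposer accepts a bribe equal to the tip total) to justify that the cost of censoring $S$ is exactly $\sum_{i\in S} t_i$.
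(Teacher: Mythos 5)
Your proof is correct and rests on exactly the same load-bearing inequality as the paper's: $\sum_i t_i \le \tfrac1n\sum_i v_i \le \max_i v_i$, derived from $t(v)\le v/n$ and the fact that the maximum dominates the average. You simply carry it out more explicitly --- applying the inequality to the surviving set $N\setminus S$ to get the subset-by-subset dominance, and spelling out the case split and the pointwise-versus-expectation point that the paper's one-line argument leaves implicit.
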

\begin{proof}
To see this note that: \begin{equation*}
    \sum_{i=1}^n t(\theta_i) \leq \sum_{i=1}^n \frac{\theta_i}{n} \leq \sum_{i=1}^n \frac{\max(\theta_1,\theta_2,\dots,\theta_n)}{n} = \max(\theta_1,\theta_2,\dots,\theta_n).
\end{equation*}
and therefore bribing the proposer and buying out all the bids (and therefore winning the object for free in the auction) is more profitable than buying out any subset of the bids and possibly losing the auction or having to pay more than the bribes for that subset would have cost. 
\end{proof}

This lemma is useful because, in equilibrium, the tipping strategy of bidders $1$ through $n$ will satisfy this property. Bidder $0$'s strategy is therefore straightforward (analogous to the case $N=\{0,1\}$). We are now in a position to describe the equilibrium in this game.
\begin{proposition}\label{prop:eqbmn}
The following constitutes an equilibrium of the game with $n+1$ bidders, $N= \{0,1,\ldots,n\}$, and buyer values drawn i.i.d. from $\text{Uniform}[0,1]$ when the seller announces a second-price auction with a reserve price $r=0$: 
\begin{itemize}
    \item Bidders 1 through $n$ submit truthful bids, and their tipping strategy as function of their value $v_i$ is given by: 
    \begin{align} \label{eq:tipn}
    &t(v) = \begin{cases}
        0 & v < \underline{v},\\
        \frac{1}{2n} \left({v^n}-\underline{v}^n \right) & \text{o.w. }
    \end{cases}
\intertext{where $\underline{v}$ solves} 
&(n+1) \frac{\underline{v}^n}{n(n-1)} - \frac{\underline{v}^{n+1}}{(n+1)} -\frac{1}{n(n+1)} =0.\label{eq:underlinev}
\end{align}
    \item Bidder $0$'s strategy as function of the observed tips $t_1,\ldots,t_n$ and their value $v_0$ is given by 
    \begin{equation*}
     \sigma_0 (v_0,t_1,\dots,t_n) = \begin{cases}
    \text{bribe}, & \sum_{i=1}^n t_i \leq v_0\\
    \text{don't bribe}, & \sum_{i=1}^n t_i > v_0 
    \end{cases}
\end{equation*}
where bribe is shorthand for paying $\sum_i t_i$ to the proposer in exchange for omitting bidder 1 through $n$'s transactions. Further bidder $0$ submits a truthful nonzero bid in the auction if and only if he bribes the proposer.
\item The proposer accepts bidder $0$'s bribe whenever it is offered and omits the other bids, otherwise the proposer includes all bids. 
\end{itemize}
\end{proposition}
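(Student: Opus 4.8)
The plan is to verify each of the three strategy components separately, with the bulk of the effort on the honest bidders' tips, where equation~\eqref{eq:underlinev} will emerge as a self-consistency condition. For \emph{bidder $0$ and the proposer}, I would first invoke Lemma~\ref{lem:tips}: its hypothesis $t(v)\le v/n$ holds here, since $t(v)=\tfrac{1}{2n}(v^n-\underline{v}^n)\le \tfrac{v^n}{2n}\le \tfrac{v}{2n}$ on $[0,1]$, so the lemma delivers bidder $0$'s threshold rule verbatim. That he bids (truthfully) exactly when he bribes is also optimal: when he bribes he stands alone and wins at the zero reserve, whereas when he declines he does so only when $v_0<\sum_j t_j\le\max_j v_j$ (the same inequality chain used in the lemma's proof), so he cannot win the standing auction and abstaining is weakly best. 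The proposer's acceptance is the assumed indifference-breaking convention.

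For the \emph{honest bidders}, fix bidder $i$ with value $v$, let the other honest bidders use $t(\cdot)$ and bidder $0$ use the threshold rule, and write $S_{-i}=\sum_{j\ne i}t(v_j)$ and $M_{-i}=\max_{j\ne i}v_j$. The crucial observation is that bidder $i$'s transaction is included — so that he pays his tip $t$ and is eligible to win — exactly when bidder $0$ does not bribe, i.e.\ when $v_0<t+S_{-i}$; since $v_0\sim U[0,1]$ and equilibrium tips sum to $n\,t(1)=\tfrac12(1-\underline{v}^n)<1$, this event has conditional probability $t+S_{-i}$. Integrating out $v_0$ leaves
\[
U_i(v,t)=\E{(t+S_{-i})\bigl(\ind{v>M_{-i}}(v-M_{-i})-t\bigr)},
\]
with the remaining expectation over the other bidders' values. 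This is a concave quadratic in $t$, the cross term $\E{S_{-i}\ind{v>M_{-i}}(v-M_{-i})}$ being constant in $t$, so its first-order condition reads $2t=\E{\ind{v>M_{-i}}(v-M_{-i})}-\E{S_{-i}}$. A standard order-statistic computation gives $\E{\ind{v>M_{-i}}(v-M_{-i})}=v^n/n$, whence the optimal interior tip is $\tfrac{v^n}{2n}-\tfrac12\E{S_{-i}}$, truncated at $0$.

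To \emph{close the fixed point}, I would match this optimum to the candidate $\tfrac{1}{2n}(v^n-\underline{v}^n)$: equating the constant terms forces $\tfrac12\E{S_{-i}}=\tfrac{\underline{v}^n}{2n}$ and simultaneously identifies $\underline{v}$ as the truncation point. Writing $\E{S_{-i}}=(n-1)\E{t(V)}$ for $V\sim U[0,1]$ and evaluating $\E{t(V)}=\int_{\underline{v}}^1\tfrac{1}{2n}(v^n-\underline{v}^n)\,dv$, the requirement $\E{S_{-i}}=\underline{v}^n/n$ rearranges (multiply out and divide by $n$) to exactly equation~\eqref{eq:underlinev}. I would then note that the left-hand side of \eqref{eq:underlinev} has derivative $\underline{v}^{n-1}\bigl(\tfrac{n+1}{n-1}-\underline{v}\bigr)>0$ on $(0,1)$ and changes sign there, so an admissible threshold exists and is unique.

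The step I expect to be most delicate is \emph{global}, rather than first-order, optimality of the tip. For large deviations the no-bribe probability saturates at $1$ (bidder $0$ never bribes once $t+S_{-i}>1$), so $U_i(v,\cdot)$ is only piecewise quadratic; beyond this kink a larger tip buys no additional inclusion and is strictly wasteful, so the interior critical point remains the global maximizer, while $t=0$ dominates for $v<\underline{v}$. It then remains to assemble the PBE: on path the strictly increasing $t(\cdot)$ is invertible where positive, so positive tips reveal values, whereas the pool at $t=0$ leaves bidder $0$ with a belief supported on $[0,\underline{v}]$ — under which, by the lemma, buying out all bids is still optimal; off-path beliefs (e.g.\ a tip exceeding $t(1)$) are chosen to preserve the threshold rule, and proposer indifferences are resolved toward bidder $0$ as assumed.
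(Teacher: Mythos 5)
Your proposal is correct and follows essentially the same route as the paper's proof: derive the honest bidder's indirect utility, use the uniformity of $F_0$ to reduce the inclusion probability to $t+S_{-i}$, obtain the interior optimum $\tfrac{v^n}{2n}-\tfrac12\E{S_{-i}}$ from the first-order condition, and close the fixed point $\E{S_{-i}}=(n-1)\E{t(V)}$ to recover \eqref{eq:underlinev}, with Lemma~\ref{lem:tips} handling bidder $0$'s best response. Your added care about the piecewise-quadratic saturation at total tip $1$ and the belief specification on the $t=0$ pool fills in details the paper relegates to a footnote and to the two-bidder discussion, but does not change the argument.
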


It is easy to see that \eqref{eq:underlinev} has exactly 1 root in $[0,1]$ by observing that the left hand side is increasing in $v$ on $[0,1]$, and evaluates to a negative number at $v =0$ and a positive number for $\theta = 1$. Unfortunately, we cannot analytically derive these roots for arbitrary $n$ since polynomials of order $\geq 5$ do not have explicit roots (and even for $n=2,3$ these are not particularly nice); however, we can use a zero finding algorithm to compute these numerically. The results for the uniform case are presented in Figure \ref{fig:Uniform_fig}. 
\newcommand{\utheta}{\underline{v}}

Analytically, we can bound how this root varies with $n$. Note that the expected total tip is $n \mathbb{E}[t(v)]$ and substituting in \eqref{eq:tipn} and simplifying via \eqref{eq:underlinev}, we have that the expected total tip  $= \frac{\utheta^n}{n-1}$. The following proposition describes the asymptotic behavior of the expected total tip:
\begin{proposition}\label{prop:underlinev}
    Let $\utheta(n)$ describe the solution to \eqref{eq:underlinev} as a function of $n$. There exists $\underline{n}$ large enough such that for $n> \underline{n}$, we have $ \frac1n \leq \utheta(n) ^n  \leq \frac{1}{\sqrt{n}}$.  
\end{proposition}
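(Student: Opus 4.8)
The plan is to first turn \eqref{eq:underlinev} into a single closed relation for $\utheta(n)^n$, and then prove each inequality by a monotonicity/sign argument rather than by solving for the root explicitly. Clearing denominators in \eqref{eq:underlinev} (multiplying through by $n(n-1)(n+1)$) gives the polynomial identity
\begin{equation*}
\utheta^n\bigl[(n+1)^2 - n(n-1)\utheta\bigr] = n-1,
\qquad\text{equivalently}\qquad
\utheta^n = \frac{n-1}{(n+1)^2 - n(n-1)\utheta}.
\end{equation*}
Writing $g(\utheta) = (n+1)^2\utheta^n - n(n-1)\utheta^{n+1} - (n-1)$, the text already records that $g$ is strictly increasing on $[0,1]$ with unique root $\utheta(n)$. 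Hence each target bound becomes a sign check: since $g$ is increasing, $\utheta(n)^n \ge 1/n$ is equivalent to $g(n^{-1/n}) \le 0$, and $\utheta(n)^n \le n^{-1/2}$ is equivalent to $g(n^{-1/(2n)}) \ge 0$.

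To evaluate $g$ at these thresholds it helps to note that at any point with $\utheta^n = c$ one has $g(\utheta) = c\,[(n+1)^2 - n(n-1)\utheta] - (n-1)$, and that the thresholds are $n^{-a/n} = \exp(-a\ln n/n)$, for which $1 - \tfrac{a\ln n}{n} \le n^{-a/n} \le 1$. I would do the lower bound first, as it is the more forgiving of the two. Substituting $c = 1/n$ and $\utheta = n^{-1/n}$ and using $(n+1)^2 - n(n-1) = 3n+1$, the inequality $g(n^{-1/n}) \le 0$ reduces to $3n+1 \le n(n-1)\,n^{-1/n}$. Bounding $n^{-1/n} \ge 1 - \tfrac{\ln n}{n}$ makes the right-hand side at least $n^2 - n - (n-1)\ln n$, which exceeds $3n+1$ for all large $n$; this settles the lower bound.

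The upper bound is where I expect the real difficulty, and I would treat it as the main obstacle. The naive estimate $\utheta \le 1$ only gives $g(n^{-1/(2n)}) \ge n^{-1/2}(3n+1) - (n-1)$, whose right-hand side behaves like $-n$ and is therefore negative for large $n$; so a crude bound on how close $\utheta(n)$ is to $1$ is nowhere near enough, and one must control $1-\utheta(n)$ to leading order. The natural tool is the ansatz $\utheta(n) = 1 - c/n$: substituting into the closed relation gives $\utheta^n \to e^{-c}$ and denominator $(n+1)^2 - n(n-1)\utheta \sim (3+c)\,n$, so the relation collapses to the scalar fixed-point equation $e^{-c}(c+3) = 1$. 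Turning this ansatz into rigorous two-sided bounds on $c$, uniform in $n$, is the crux of the whole argument.

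I would flag candidly that this same fixed-point computation is what decides whether $n^{-1/2}$ is the correct ceiling. The equation $e^{-c}(c+3)=1$, i.e.\ $c^\ast + 3 = e^{c^\ast}$, has a positive solution $c^\ast$, and it yields $\utheta(n)^n \to e^{-c^\ast} = 1/(c^\ast+3)$, a strictly positive constant (numerically $\approx 0.22$). Since $n^{-1/2}\to 0$, this leading-order limit is in tension with the stated upper bound, so the genuinely delicate part of the proof is reconciling the constant limit with the claimed $n^{-1/2}$ bound. It is precisely here that I would concentrate the analysis, and I would expect the argument either to require a sharper (or corrected) upper envelope than $n^{-1/2}$, or to reveal that the intended conclusion is really the resulting $\Theta(1/n)$ bound on the expected total tip $\utheta(n)^n/(n-1)$.
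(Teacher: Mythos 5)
Your reduction of \eqref{eq:underlinev} to the cleared form $\utheta^n\bigl[(n+1)^2 - n(n-1)\utheta\bigr] = n-1$ and your sign-check strategy are exactly the paper's approach, and your lower-bound argument (showing $g(n^{-1/n})\le 0$ for large $n$, hence $\utheta(n)^n\ge 1/n$) coincides with the paper's and is correct. The important point, however, is that the suspicion you flag about the upper bound is justified: the claimed bound $\utheta(n)^n \le 1/\sqrt{n}$ is false for large $n$, and the paper's own proof of it contains an error. The paper substitutes $\utheta^n = 1/\sqrt{n}$ into the left-hand side of \eqref{eq:underlinev} and asserts that the first term is $O(n^{-3/2})$ while ``the latter two terms are smaller''; but the second term, $\utheta^{n+1}/(n+1) = n^{-1/2}\,n^{-1/(2n)}/(n+1)$, is also $\Theta(n^{-3/2})$ with the same leading constant. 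The difference of the first two terms is only $O(\ln n/n^{5/2})$, which is dominated by the third term $1/(n(n+1)) = \Theta(n^{-2})$, so the expression is negative, not positive, for large $n$ --- which places the root \emph{above} $n^{-1/(2n)}$ rather than below it.

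Your fixed-point computation gives the correct asymptotics: writing $\utheta(n) = 1 - c/n$, the relation becomes $e^{-c}(c+3) = 1 + o(1)$, whose root $c^\ast \approx 1.505$ yields $\utheta(n)^n \to 1/(c^\ast+3) \approx 0.222$, a strictly positive constant. (Numerically, at $n=100$ the root of \eqref{eq:underlinev} is $\utheta \approx 0.985$ with $\utheta^{100} \approx 0.22 > 0.1 = 1/\sqrt{100}$.) So there is nothing to reconcile: the correct statement is $\utheta(n)^n = \Theta(1)$, and the meaningful conclusion is the one you identify, namely that the expected total tip $\utheta(n)^n/(n-1)$ is $\Theta(1/n)$ rather than $O(n^{-3/2})$. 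This still vanishes, so the paper's qualitative conclusions (bidder $0$ censors with probability tending to one, seller revenue collapses) survive, but the rate claims that invoke the $n^{-3/2}$ bound --- including the discussion following Proposition \ref{prop:underlinev} and the multi-block argument after Proposition \ref{prop:eqbmnk} --- need to be restated with the $1/n$ rate.
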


Proposition \ref{prop:underlinev} is particularly useful when you consider the fact that for large $n$, by the law of large numbers, the total tip concentrates around the expected tip with high probability (buyer values are i.i.d. bounded random variables). Furthermore, by Proposition \ref{prop:underlinev}, this is decreasing at a rate at least $1/n\sqrt{n}:$ as $n$ grows, and individual bidders are willing to tip less. To see why tipping is only profitable when it leads to the bid not being censored \emph{and} winning the auction, but increasing the tip increases the probability that all bids are not censored. In short, tips have public goods type properties. Indeed, the rate of tipping shrinks fast enough so that the total tip is also decreasing. Therefore bidder $0$ wins the auction with increasing probability in $n$, asymptotically tending to $1$. Note that the seller only receives revenue when there is more than one bidder in the auction (or more generally, in the auction with a reserve price $r$, makes revenue larger than the reserve price)--- and this happens with vanishing probability as $n$ grows large.

\begin{proposition}\label{prop:eqbmrn}
    As $n$ grows large, the expected revenue of the auction with reserve price $r$ reduces asymptotically to the expected revenue of a published price of $r$. 
\end{proposition}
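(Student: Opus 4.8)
The plan is to run the same program as in the reserve-free case, Propositions~\ref{prop:eqbmn} and~\ref{prop:underlinev}, but now bookkeeping the seller's payment rather than the proposer's tip. First I would identify the equilibrium of the $(n+1)$-bidder game with reserve $r$. By analogy with Proposition~\ref{prop:eqbm1r} it should have the same shape: bidders $1,\dots,n$ bid truthfully and tip by a symmetric function $t(\cdot)$ that is $0$ below a threshold $\utheta_r\ge r$ and strictly increasing above it, while bidder $0$ compares ``censor the whole honest field and win at the reserve'' against ``do nothing.'' Lemma~\ref{lem:tips} (whose hypothesis $t(v)\le v/n$ I would re-verify for the reserve tipping function) guarantees that this all-or-nothing comparison is the relevant one, so bidder $0$ censors exactly when $v_0\ge r+\sum_i t_i$. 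The honest first-order condition then balances the public-good effect of lowering the censorship probability $F_0\!\left(r+\sum_i t_i\right)$ against the private cost of the tip, with the continuation value now computed as the winning surplus in a second-price auction with reserve.

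Second, I would show the total tip still vanishes. The estimates behind Proposition~\ref{prop:underlinev} should transfer with only cosmetic changes: the expected total tip has a closed form in the threshold $\utheta_r$, and the same monotonicity and root-location bounds force $\E{\sum_i t_i}\to 0$; since values are i.i.d.\ and bounded, the law of large numbers upgrades this to $\sum_i t_i\to 0$ in probability.

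Third, I would decompose the expected revenue along bidder $0$'s censorship decision. On $\{v_0\ge r+\sum_i t_i\}$ the only surviving bid is bidder $0$'s, so the winner pays exactly the reserve $r$; on the complement the honest field runs a genuine second-price auction with reserve. Feeding in the concentration from the previous step, the threshold $r+\sum_i t_i$ collapses to $r$, so the censorship event has probability tending to $\Pr{v_0\ge r}=1-r$ and contributes $r(1-r)$ to revenue in the limit. It then remains to evaluate the no-censorship contribution and to compare the total against the published-price benchmark $r\!\left(1-r^{\,n+1}\right)$, which itself tends to $r$.

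The main obstacle is precisely this no-censorship term. Conditional on $\{v_0< r+\sum_i t_i\}$ the honest order statistics are essentially unconditioned, so one must argue carefully that the second-highest included value does not leave a persistent competitive surplus above the reserve in the limit; this is exactly where the collapse of the censorship threshold to $r$ and the rate at which $\sum_i t_i\to 0$ must be deployed in tandem, rather than merely qualitatively, and it is the delicate heart of the proof. A secondary, purely technical nuisance inherited from the reserve-free analysis is that $\utheta_r$ solves a high-degree polynomial with no closed-form root, so every statement about it has to be routed through qualitative sign and monotonicity estimates rather than an explicit expression.
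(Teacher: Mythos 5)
Your first two steps track the paper's argument quite closely. The paper likewise derives the reserve-price tipping function from the first-order condition (using revenue equivalence to write the marginal benefit of winning as $\frac{v^n}{n}-\frac{r^n}{n}$) and then concludes that total tips still vanish. The one difference of route is in your second step: rather than re-running the root-location estimates of Proposition~\ref{prop:underlinev} for the new threshold, the paper observes that the reserve enters the tipping formula only as an extra subtracted constant and that $c_r<c$, so the $r>0$ tipping function is pointwise dominated by the $r=0$ one and the collapse of total tips is inherited for free. Your plan would work too, but costs more computation for no extra payoff.

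The genuine gap is your third step, and the difficulty you flag there is not a delicate estimate waiting to be made precise --- the claim you need is simply unavailable. The no-censorship event is $\{v_0<r+\sum_i t_i\}$, and once total tips vanish its probability converges to $\Pr{v_0<r}=r>0$; it does not shrink with $n$. On that event bidder $0$'s all-or-nothing best response means \emph{all} $n$ honest bids are included, the honest values are essentially unconditioned because $v_0$ is independent of them, and the second-highest of $n$ i.i.d.\ draws from $U[0,1]$ converges to $1$, not to $r$. Your decomposition therefore produces a limit of $(1-r)\cdot r+r\cdot 1=2r-r^2$ for expected revenue, which exceeds the posted-price limit $r$ whenever $r\in(0,1)$. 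No rate information about how fast $\sum_i t_i\to 0$ can repair this: the tips control only the censorship probability, which tends to $1-r$ rather than to $1$ when $r>0$ (contrast the $r=0$ case, where censorship becomes certain and revenue genuinely collapses). You should also know that the paper's own proof stops after establishing that total tips collapse and asserts that the proposition follows; it never addresses the no-censorship branch at all, so the step you are unable to complete is missing from the published argument as well. What can be proved with the tools at hand is only the conditional statement: on the censorship event, whose probability tends to $1-r$, the sale price is exactly $r$. Turning that into the stated claim about total expected revenue requires an argument about the complementary event that neither your sketch nor the paper supplies.
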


\subsection{General Distributions}

\begin{figure}%
    \centering
    \subfloat{{\includegraphics[width=6.5cm]{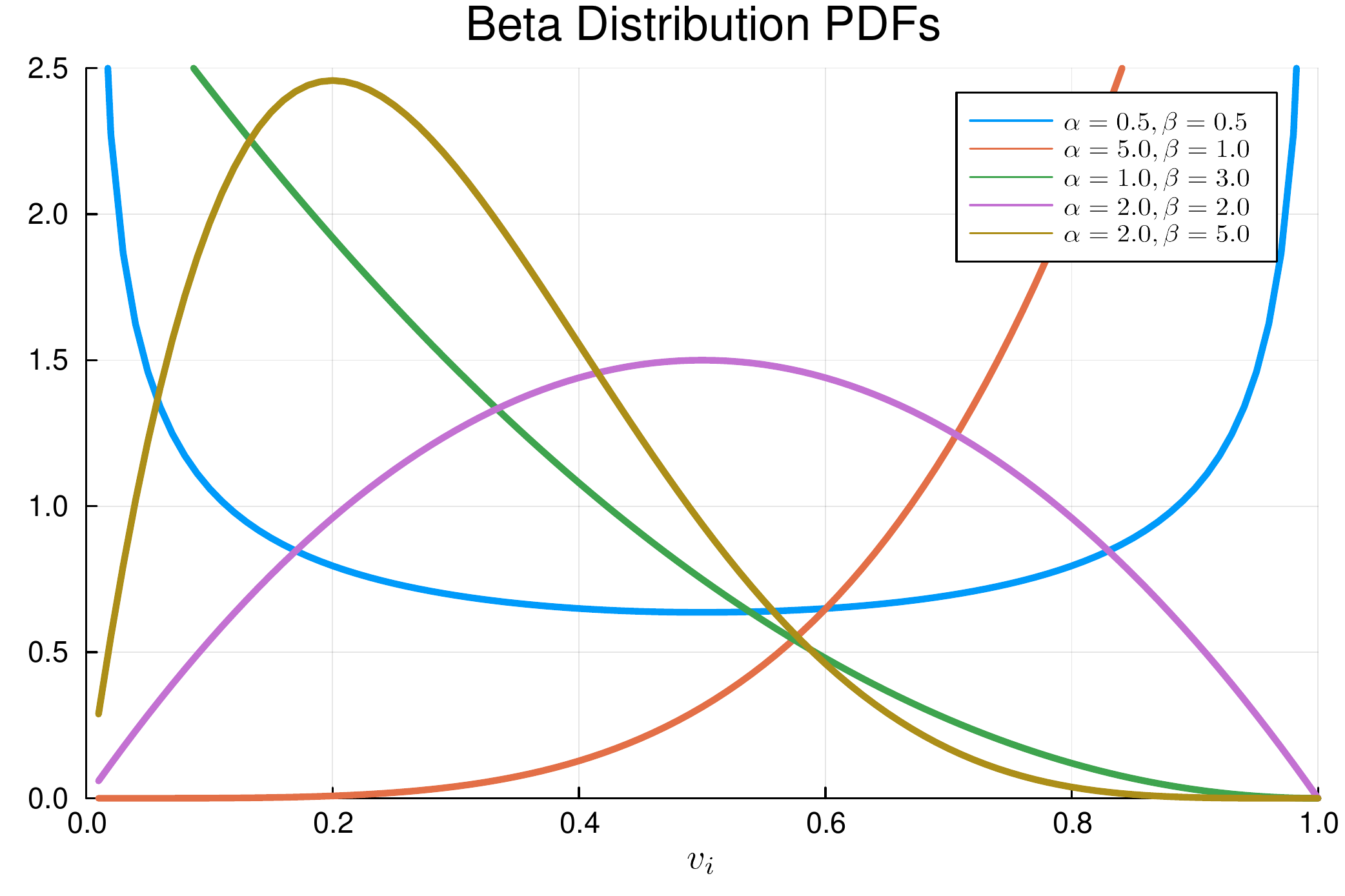}}}%
    \subfloat{{\includegraphics[width=6.5cm]{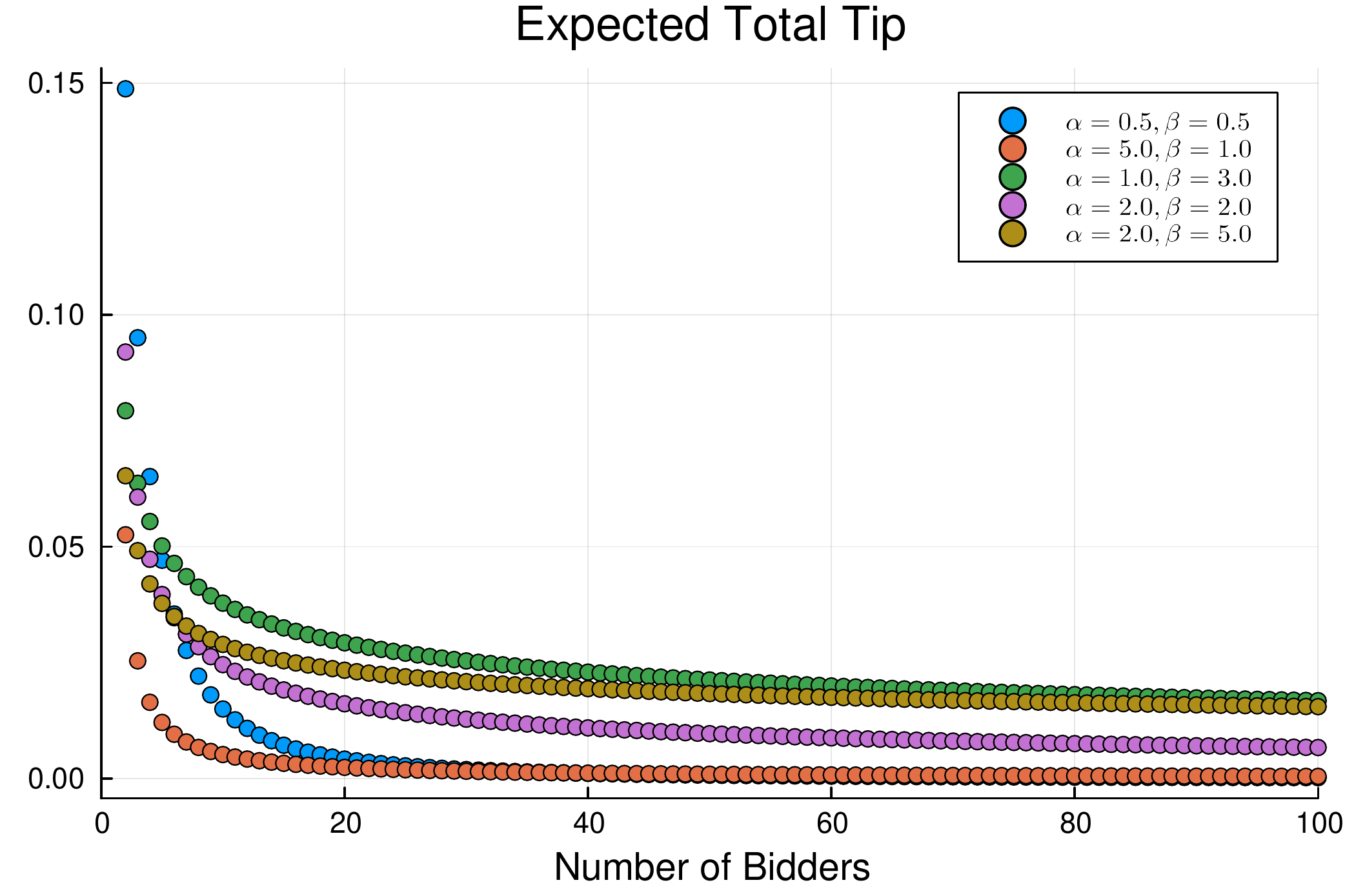} }}%
    \caption{Expected Total Tips for $F_0 \sim \text{Uniform}[0,1]$, $F \sim \text{Beta}(\alpha,\beta)$}%
    \label{fig:example}%
\end{figure}

It is straightforward to generalize our results to the case where bidders $\{1,\ldots, n\}$ are distributed according to some general distribution with density $f$ and CDF $F$ on $[0,1]$.

\begin{assumption}\label{ass:1}
We assume throughout that $F$ satisfies $\int_0^v F^{n-1}(\theta) d\theta \leq \frac{v}{n}$ for all $v \in [0,1]$.
\end{assumption}

Formally, we have the following proposition:

\begin{proposition}\label{prop:eqbmngen}
The following constitutes an equilibrium of the game with $n+1$ bidders, i.e. $N= \{0,1,\ldots,n\}$, when the seller announces a second-price auction with a reserve price $r=0$, bidder $0$ has a value $\text{Uniform}[0,1]$ and bidders $1$ through $n$ have values distributed i.i.d. according to a distribution with density $f$ and CDF $F$ satisfying Assumption \ref{ass:1}:
\begin{itemize}
    \item Bidders 1 through $n$ submit truthful bids, and their tipping strategy as function of their value $v_i$ is given by: 
    \begin{align} \label{eq:tipngen}
    &t(v) = \begin{cases}
        0 & v < \underline{v},\\
        \frac{1}{2} \int_{\utheta}^v F^{n-1}(\theta) d\theta  & \text{o.w. }
    \end{cases}
\intertext{where $\underline{v}$ solves} 
   &\int_0^1 F^{n-1}(\theta) d\theta - \int_{\utheta}^1 F^n(\theta) d\theta =  \frac{n+1}{n-1} \int_0^{\utheta} F^{n-1}(\theta) d\theta.
\end{align}
    \item Bidder $0$'s strategy as function of the observed tips $t_1,\ldots,t_n$ and their value $v_0$ is given by 
    \begin{equation*}
     \sigma_0 (v_0,t_1,\dots,t_n) = \begin{cases}
    \text{bribe}, & \sum_{i=1}^n t_i \leq v_0\\
    \text{don't bribe}, & \sum_{i=1}^n t_i > v_0 
    \end{cases}
\end{equation*}
where bribe is shorthand for paying $\sum_i t_i$ to the proposer in exchange for omitting bidder 1 through $n$'s transactions. Further bidder $0$ submits a true nonzero bid in the auction if and only if he bribes the proposer.
\item The proposer accepts bidder $0$'s bribe whenever it is offered and omits the other bids, otherwise the proposer includes all bids. 
\end{itemize}
\end{proposition}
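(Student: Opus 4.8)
The plan is to verify the profile is a PBE by checking each player's strategy is a best response, which reduces everything to a single tipping optimization for a representative honest bidder. Truthful bidding by bidders $1,\dots,n$ is immediate: tips are paid before inclusion is decided, hence sunk at the bidding stage, so conditional on the included set the game is an ordinary second-price auction in which truthful bidding is weakly dominant; the same reasoning makes bidder $0$'s truthful bid a best response on the event he bribes, where he faces no included competitor. For bidder $0$'s bribe rule I would first check the hypothesis of Lemma \ref{lem:tips}: from \eqref{eq:tipngen} and Assumption \ref{ass:1}, $t(v)=\tfrac12\int_{\underline v}^v F^{n-1}(\theta)\,d\theta\le \tfrac12\int_0^v F^{n-1}(\theta)\,d\theta\le \tfrac{v}{2n}\le \tfrac v n$, so Lemma \ref{lem:tips} applies verbatim and bidder $0$ optimally bribes exactly when $\sum_i t_i\le v_0$. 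On the non-bribe event the chain $\sum_i t_i\le\max_j v_j$ from that lemma gives $v_0<\max_j v_j$, so bidder $0$ would lose even if he bid; hence not bidding when not bribing is also a best response.

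The heart of the proof is the tipping first-order condition for a representative honest bidder $i$ with value $v$. Fix the others at the candidate strategies and let $X=\sum_{j\neq i} t(v_j)$ and $Y=\max_{j\neq i} v_j$. Since $v_0\sim\mathrm{Uniform}[0,1]$, bidder $0$ fails to bribe with probability $s+X$ (at the candidate tips the total $\sum_j t(v_j)\le\max_j v_j\le1$, so no truncation occurs), in which event all honest bids are included, $i$ pays his tip $s$, and wins the second-price auction against the rivals iff $v>Y$. Hence
\begin{equation*}
U_i(v,s)=\mathbb{E}_{v_{-i}}\!\left[(s+X)\big((v-Y)^+-s\big)\right].
\end{equation*}
This is strictly concave in $s$ (the integrand has $\partial^2/\partial s^2=-2$), and crucially the cross term $\mathbb{E}[X(v-Y)^+]$ carries no $s$-dependence, so although $X$ and $Y$ are correlated only their marginal means enter the optimum. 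Differentiating gives the linear condition $\mathbb{E}[(v-Y)^+]-2s-\mathbb{E}[X]=0$, and integration by parts yields $\mathbb{E}[(v-Y)^+]=\int_0^v F^{n-1}(\theta)\,d\theta$ (since $Y$ has CDF $F^{n-1}$). The unconstrained optimum is therefore $s^\ast(v)=\tfrac12\big(\int_0^v F^{n-1}(\theta)\,d\theta-\mathbb{E}[X]\big)$, truncated at $0$.

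It remains to close the fixed point: the constant $\mathbb{E}[X]=(n-1)\,\mathbb{E}[t(V)]$ must be consistent with the posited $t(\cdot)$. Matching $s^\ast(v)$ with \eqref{eq:tipngen} requires $\mathbb{E}[X]=\int_0^{\underline v}F^{n-1}(\theta)\,d\theta$, which simultaneously forces $t(\underline v)=0$ and identifies $\underline v$ as the lower edge of the positive-tip region. I would then compute $(n-1)\mathbb{E}[t(V)]$ by integration by parts,
\begin{equation*}
(n-1)\,\mathbb{E}[t(V)]=\frac{n-1}{2}\left(\int_{\underline v}^1 F^{n-1}(\theta)\,d\theta-\int_{\underline v}^1 F^{n}(\theta)\,d\theta\right),
\end{equation*}
set it equal to $\int_0^{\underline v}F^{n-1}$, and rearrange (splitting $\int_{\underline v}^1 F^{n-1}=\int_0^1 F^{n-1}-\int_0^{\underline v}F^{n-1}$) to arrive exactly at the defining equation for $\underline v$ in the statement. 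Existence and uniqueness of a root $\underline v\in(0,1)$ then follow as in the uniform case from the monotonicity and sign of this expression.

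The main obstacle is this self-consistency step: it is what pins down $\underline v$ and requires the two integration-by-parts identities to line up precisely with the stated equation, so I would carry out that algebra carefully. Two smaller points need attention: confirming that $s^\ast$ is a \emph{global} optimum, accounting for the truncation of the no-bribe probability at total tip $=1$ (beyond that point the objective becomes $\mathbb{E}[(v-Y)^+-s]$, strictly decreasing, so the concave first-order solution remains global) and the corner $s=0$ for $v<\underline v$; and the PBE bookkeeping, which is light here because bidder $0$'s action depends on the tips only through their sum, so his off-path beliefs about individual rivals are irrelevant.
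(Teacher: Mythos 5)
Your proposal is correct and follows essentially the same route as the paper: derive the linear first-order condition for the tip using the uniformity of bidder $0$'s value (so only $\mathbb{E}[X]=(n-1)\mathbb{E}[t(V)]$ enters), close the fixed point for $\underline{v}$ via integration by parts to recover the stated defining equation, and invoke Assumption \ref{ass:1} together with Lemma \ref{lem:tips} to justify bidder $0$'s all-or-nothing bribing rule. Your added checks (global concavity, the no-truncation bound $\sum_i t_i\le\max_j v_j\le 1$, and bidder $0$'s non-bidding best response off the bribe event) are slightly more explicit than the paper's, which relegates these points to the proof of Proposition \ref{prop:eqbmn} and its footnote, but the substance is identical.
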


This proposition allows us to numerically solve for tipping behavior in this auction. Using the flexible Beta distribution for a range of parameters, we compute $n\mathbb{E}[t]$ as a function of $n$ in Figure \ref{fig:example}.

Analytical results for the case where bidder $0$'s value is distributed non-uniformly appear out of reach. To see why---for bidders $1$ through $n$, part of the payoff of increasing their tip is increasing the probability that bidder $0$ chooses not to buy them out. When bidder $0$'s value is distributed uniformly, the increase in probability is constant and independent of others' tips (which from the perspective of the bidder is a random variable). This simplifies the optimality condition and makes it analytically tractable. Nevertheless, the intuition above suggests that our qualitative results (seller expected revenue drops to close to the posted price, tips do not offer much ``protection'' due to the public goods nature of tips suggests, bidder $0$ has a strong advantage in the auction) carry over to this case as well.

\section{Restoring Censorship Resistance}
We now discuss possible design choices to provide additional censorship resistance, so that an auction can be run on chain with the desired results. 

\subsection{Auction over Multiple Blocks}
We now investigate whether running the auction over multiple blocks restores the desired behavior. Formally, recall that using multiple blocks as the underlying public bulletin board had a higher censorship resistance (Example \ref{ex:multiple}) than in the case of a single block (Example \ref{ex:single}).  

Formally we consider the following dynamic game corresponding to an auction being run over $m$ blocks. We assume that each of these blocks is produced by an independent proposer.\footnote{In practice, an majority of blocks on major blockchains is produced by one of a small oligopoly of proposers. We discuss the implications of this in the sequel.}
\begin{enumerate}
    \item Period $0$: Bidders learn their values $v_i$.  Bidders $1, \ldots, n$ each submit simultaneously a private bid $b_i$ and a public tip $t_i$. 
    \item Period $j$ for $j$ in $1$ to $k$: Bidder $0$ observes which bids from $1, \ldots n$ have not been included in a block in periods 1 to $j-1$.  They offer Proposer $j$ a take-it-or-leave-it-offer of a subset $S_j$ of the unincluded bids and a payment $p_j$ to exclude that subset.  The proposer $j$ observes the tips and the offer from Bidder $0$ and decides which bids if any to include. 
    \item Period $m+1$: The seller's auction is run on blocks produced in periods $1$ to $m$. 
\end{enumerate}
If a transaction is included in period $j$, it is removed from the set of bids in the mempool, its tip is attributed to proposer $j$ where $j$ is the block it was included in, it is included in the auction, and it cannot be included in subsequent blocks. 

Note that the game we describe is a natural extension of the previous game to the case of an auction over $m>1$ blocks. In particular, it reduces to the original game for the case of $m=1$. We consider the same refinement as before, which applies to this game in a similar fashion. 

Note that there are two possible sources of additional security in this auction. The first is mechanical: in order to censor a transaction, intuitively, bidder $0$ has to bribe $m$ proposers, which is more expensive for a given tip. The second is that the marginal returns to a tip have increased (increasing a tip by $q$ increases the cost to censor for bidder $0$ by $m q$, and decreases the probability they can afford it correspondingly). 

In our results, we show that the latter effect is null. In particular, we show that for $m<n,$ the tipping behavior of bidders $1,\ldots n$ stays unaffected. 

Formally, we have the following result: 
\begin{proposition}\label{prop:eqbmnk} Suppose buyers $1$ to $n$ have values drawn i.i.d. $U[0,1]$ and buyer $0$ has value drawn i.i.d. $U[0,\kappa]$ for $\kappa >m$. The following constitutes an equilibrium of the game with $n+1$ bidders, i.e. $N= \{0,1,\ldots,n\}$, when the seller announces a second-price auction with a reserve price $r=0$ to be run over $m$ blocks for $m<n$:  Bidders 1 through $n$ and the proposers have the same strategy as in Proposition \ref{prop:eqbmn}.

Bidder $0$'s strategy as function of the observed tips $t_1,\ldots,t_n$ and his value $v_0$ is given by 
    \begin{equation*}
     \sigma_0 (v_0,t_1,\dots,t_n) = \begin{cases}
    \text{bribe}, & m \sum_{i=1}^n t_i \leq v_0,\\
    \text{don't bribe}, & \text{ otherwise,}
    \end{cases}
\end{equation*}
\end{proposition}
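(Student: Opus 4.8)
The plan is to verify the two halves of the claimed profile separately: that bidder~$0$'s threshold rule is a best response, and that bidders $1,\ldots,n$ cannot improve on the tipping function of Proposition~\ref{prop:eqbmn}. The conceptual core is that moving to $m$ blocks rescales the colluder's problem without changing its shape, leaving the honest bidders' optimization invariant. I would first reduce bidder~$0$'s dynamic problem to a static one: since all bids and tips are fixed in period~$0$ and no payoff-relevant information is revealed in periods $1,\ldots,m$ (each proposer responds deterministically via the indifference-breaking rule), keeping a bid censored requires excluding it from \emph{every} block. Hence bidder~$0$'s only real choice is a single set $S\subseteq\{1,\ldots,n\}$ to suppress, at cumulative cost $m\sum_{i\in S}t_i$ — he must re-compensate each of the $m$ proposers the tips of $S$. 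This is precisely the censorship cost $\phi(t)=mt$ of Example~\ref{ex:multiple} aggregated over $S$.

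Bidder~$0$'s problem is then a static subset-selection problem, for which I would establish the multi-block analogue of Lemma~\ref{lem:tips}. Comparing censoring all bids against censoring a strict subset $S$ (and then winning the auction against $\bar S=\{1,\ldots,n\}\setminus S$ at second price $\max_{i\in\bar S}v_i$), censoring all is weakly better exactly when $m\sum_{i\in\bar S}t_i\le\max_{i\in\bar S}v_i$. Bounding the left-hand side through the equilibrium tip bound $t(v)\le v/(2n)$ and invoking $m<n$ then yields that the optimal action is all-or-nothing with the stated cutoff $m\sum_i t_i\le v_0$; the off-path beliefs and the ``bid iff bribe'' clause are handled exactly as in Proposition~\ref{prop:eqbmn}.

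The clean and most important step is the invariance of the honest tipping function. Fix bidder~$i$, write $X=\sum_{j\neq i}t(v_j)$ for the others' total tip, and note that by the cutoff just derived, bidder $i$ is \emph{not} censored iff $v_0<m(t_i+X)$. Since $v_0\sim U[0,\kappa]$ and $m(t_i+X)\le m/2<\kappa$ over the relevant range (the total tip is at most $1/2$ and $\kappa>m$), this event has probability $\tfrac{m}{\kappa}(t_i+X)$, which is affine in $t_i$ with slope $\tfrac m\kappa$. Because bidder $i$'s conditional payoff when not censored — his gross second-price surplus $s_i(v)$ with bidder~$0$ absent from the auction, net of the tip $t_i$ — is identical to the single-block game, his objective becomes
\begin{equation*}
U_i^{\text{multi}}(t_i)=\frac{m}{\kappa}\,\E{(t_i+X)\big(s_i(v)-t_i\big)}=\frac{m}{\kappa}\,U_i^{\text{single}}(t_i),
\end{equation*}
where the expectation is over the other bidders' values. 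Since $m/\kappa>0$ is a constant, the maximizer is unchanged, so the tipping function of Proposition~\ref{prop:eqbmn} remains a best response. This is exactly why the uniform assumption on $v_0$ is needed: it makes the probability of escaping censorship affine in one's own tip, so the $m$-fold increase in censorship cost cancels against the $\kappa$-wide support and merely scales the objective.

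I expect the main obstacle to be the subset-domination step for bidder~$0$ rather than the invariance. With the censorship cost inflated by $m$, the inequality $m\sum_{i\in\bar S}t_i\le\max_{i\in\bar S}v_i$ is substantially tighter than its single-block counterpart $\sum_i t_i\le\max_i v_i$ in Lemma~\ref{lem:tips}, and this is the only place where the hypothesis $m<n$ must do work. The delicate realizations are those in which many retained bidders have high values, since there both sides are of comparable magnitude; controlling these requires bounding the realized total tip of $\bar S$ against its largest value using the size and shape of $t(\cdot)$, not merely the crude $v/(2n)$ bound. Once that inequality is secured, the remaining comparisons are routine and parallel Proposition~\ref{prop:eqbmn}.
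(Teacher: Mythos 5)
Your central step --- that under bidder $0$'s threshold rule the probability that bidder $i$ escapes censorship is $\tfrac{m}{\kappa}(t_i+X)$, so the honest bidder's objective is the single-block objective scaled by the positive constant $m/\kappa$ and its maximizer is unchanged --- is exactly the paper's proof, which consists of essentially this one observation (the paper writes the constant as $1/\kappa$; your $m/\kappa$ is the more careful bookkeeping, and either way the argmax is unaffected). The role you assign to $\kappa>m$ is also the paper's: it keeps $m(t_i+X)\le m/2<\kappa$, so $F_0$ is evaluated on its linear range.

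The genuine gap is the step you flagged and then deferred: the multi-block analogue of Lemma \ref{lem:tips}. The inequality you need, $m\sum_{i\in\bar S}t_i\le\max_{i\in\bar S}v_i$, is not merely delicate --- it fails for the equilibrium tips of \eqref{eq:tipn} once $m\ge 3$ and several retained bidders have high values. Since $t(v)\le v^n/(2n)$, the best available bound is $m\sum_{i\in\bar S}t(v_i)\le\tfrac{m}{2}\left(\max_{i\in\bar S}v_i\right)^n$, which exceeds $\max_{i\in\bar S}v_i$ whenever $\max_{i\in\bar S}v_i>(2/m)^{1/(n-1)}$. Concretely, take $n=10$, $m=9$, all honest values near $1$, and $v_0=5$: the prescribed rule says bribe (cost $m\sum_i t_i\approx 4$, payoff $\approx 1$), while simply competing in the second-price auction yields payoff $\approx 4$, so the stated threshold strategy for bidder $0$ is not a best response at such realizations, and no sharper bound on $t(\cdot)$ can rescue the all-or-nothing claim there. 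Note that the paper's own one-paragraph proof inherits the same problem: it defers to Lemma \ref{lem:tips}, whose hypothesis $t(v)\le v/n$ no longer delivers the required domination once censorship costs are inflated by $m$. Your writeup is the more candid of the two about where the difficulty sits, but the step you hoped would be ``routine'' is precisely where the argument breaks, so the proof as proposed is incomplete.
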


Before we proceed, we comment on the assumption that bidder $0$'s
 value is distributed $U[0,\kappa]$. Suppose bidder $0$ is distributed $U[0,1]$, but bidders $1$ to $n$ tip as in Proposition \ref{prop:eqbmn}. Note that with positive probability the total tip could exceed $1$ when $m>1$. Therefore, given bidders 2 to $n$ follow the tipping strategy of Proposition \ref{prop:eqbmn},  bidder $1$ will have incentives to shade their tip relative to $t(\cdot)$. After all, the marginal value of a tips depends on the how they increase the probability of not being censored. From the Proof of \ref{prop:eqbmn}, in the case of $m=1$, increasing one's tip on the margin always increases the probability that the bids are not censored, because the total tip is strictly smaller than $1$ with probability $1$ on path. Intuitively, therefore if bidder $0$ is distributed $U[0,1]$, and the auction is conducted over $m>1$ blocks, the equilibrium tipping strategy for bidders $1$ to $n$ is weakly lower than the corresponding tipping strategy for $m=1$ (Proposition \ref{prop:eqbmn}). This can be shown numerically, but is out of reach analytically. 

Note that we had already shown that expected total tip of $n$ bidders was smaller than $1/n^{3/2}$. Therefore, the probability that bidder $0$ does not censor the remaining bids collapses to $0$ as $n$ grows large, as long as $m$ grows sublinearly with $n$. To see that we had already shown that expected total tip of $n$ bidders was smaller than $1/n^{3/2}$. Therefore $m$ times this for $m <n$ still grows smaller than $1/\sqrt{n}$. 

Put differently, guaranteeing the auction outcome is ``as desired'' requires $m> n$. This comes with its own costs: for example, the auction would have to remain open for a relatively long time which may be undesirable, particularly for financial applications.

\subsection{Multiple Concurrent Block Proposers}
Depending on the number of bidders and the time constraints inherent to the specific auction application, it may not be feasible to hold the auction for long enough to achieve the desired censorship resistance level. 

A different solution we now consider would be to allow more than one proposer within a single slot. Formally, we now consider $k$ concurrent block proposers (by analogy to our previous section where we considered $k$ sequential block producers). The seller announces an auction which will execute within the single slot, i.e. the bids included on at least one of the $k$ concurrent produced blocks will be included in the auction. 

In view of the concurrency, we allow bidders to submit conditional tips, which depend on the number of proposers who include the transaction.  For simplicity, we consider a \textit{twin tip}, i.e., each bidder submits a conditional tip of the form $(t,T)$, where $T$ is paid if only a single proposer includes bidder 1's transaction and $t$ is paid if more than one proposer includes the transaction.

\begin{observation}
With $k$ concurrent proposers and conditional tipping, the censorship resistance of a conditional tip $(t,T)$ is straightforwardly verified as:
\begin{equation}
    \varphi (t,T) = kT. 
\end{equation}
\end{observation}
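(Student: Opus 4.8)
The plan is to argue directly from the definition of censorship resistance as the minimum cost to make the write fail, where here ``fail'' means that \emph{none} of the $k$ concurrent proposers includes the bidder's transaction. First I would observe that since censorship requires exclusion by every one of the $k$ proposers, the adversary (bidder $0$) must make a take-it-or-leave-it bribe offer to each proposer, and the relevant object is the cheapest profile of bribes $(p_1,\ldots,p_k)$ under which ``all exclude'' is sustainable as a best response for each proposer given the others.

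Next I would identify the binding incentive constraint. Fix a proposer $j$ and suppose all other proposers exclude. If proposer $j$ deviates and includes the transaction, then $j$ is the \emph{sole} includer and therefore collects the high conditional tip $T$ rather than the low tip $t$, since $t$ is paid only when two or more proposers include. Hence proposer $j$'s outside option from rejecting the bribe is exactly $T$, so to deter this deviation the adversary must offer $p_j \geq T$. As this holds for every $j$, the total cost of any successful censorship satisfies $\sum_{j=1}^k p_j \geq kT$, giving the lower bound $\varphi(t,T) \geq kT$.

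For the matching upper bound I would exhibit the bribes $p_j = T$ for all $j$ and verify that censorship is achievable at cost $kT$. Given these offers, when all proposers exclude each proposer is exactly indifferent between excluding (collecting the bribe $T$) and deviating to include as the sole includer (collecting $T$); invoking the tie-breaking convention that resolves proposer indifference in favor of bidder $0$, every proposer excludes. Thus ``all exclude'' is supported at total cost $kT$, and combining with the lower bound yields $\varphi(t,T) = kT$.

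The main subtlety to get right is the asymmetry that drives the whole construction: the censorship cost is governed entirely by the high tip $T$ and is completely insensitive to the low tip $t$. I would stress that this is precisely because the relevant deviation for each proposer is to become the \emph{unique} includer, which triggers the $T$ payment, so $t$ never enters the adversary's cost. I expect the only real care needed is to phrase the deviation argument at the level of the simultaneous-move subgame among the $k$ proposers and to confirm that bribing each proposer exactly $T$ genuinely sustains the all-exclude outcome under the stated indifference-breaking rule, rather than merely making it weakly optimal for the adversary.
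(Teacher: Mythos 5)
Your argument is correct and matches the paper's own (unwritten) verification: the paper treats this observation as immediate from the same logic as its multi-block example, namely that each proposer must be compensated for the tip they forgo by excluding, and under the conditional tip that forgone amount is the sole-includer payment $T$, giving a total of $kT$; this is also consistent with the proposer-subgame analysis in the proof of Proposition \ref{prop:multproposer}, where guaranteed censorship requires a bribe of $z \geq T$ to each proposer. Your added care in identifying the binding deviation (becoming the unique includer) and in invoking the stated indifference-breaking rule for the matching upper bound is exactly the right way to make the ``straightforward verification'' precise.
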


It is important to note that the conditional tip disentangles the cost of inclusion (for the transacting party) from the cost of censoring, i.e. if $T \gg t$, then the censorship resistance is $kT$ which is much larger than the cost of inclusion, $kt$. 

After the honest bidder observes $v_1$ and submits his private bid $b_1$ and public tip $(t,T)$, the bribing bidder submits a bribe to each proposer. Formally, we first consider the following game:
\begin{enumerate}
    \item Seller announces second-price sealed-bid auction with reserve price $r$ to be conducted over a single slot. 
    \item Buyers learn their values $v_i \sim F$. 
    \item Buyers $1, \ldots, n$ each submit simultaneously a private bid $b_i$ and a public tip $t_i, T_i$. 
    \item Buyer $0$ observes all the other tips $t_i, T_i$ and simultaneously offers each proposer a take-it-or-leave-it-offer of a subset $S \subseteq \{1,\ldots, n\}$ of bidders and a payment $p$ to exclude that subset's bids. Bidder $0$ also submits his own bid $b_0$.
    \item Each Proposer simultaneously accepts or rejects bidder $0$'s offer and  constructs the block accordingly i.e., either containing bids of set $N \setminus S$ or $N$. 
    \item The auction is computed based on the union of the bids included in all blocks, tips are paid based on the inclusion behavior. 
\end{enumerate}

As before we focus our attention on equilibria where Bidders $\{1,\ldots,n \}$ bid truthfully in the auction. Note that each proposer, in choosing whether to censor a transaction, needs to reason about the behavior of other proposers since that potentially affects their tip if they include the transaction.

\begin{proposition} \label{prop:multproposer}
Consider the Multiple Concurrent Block Proposer game, with m proposers and $n=1$ honest bidder, i.e., $N=\{0,1\}$, with bidder $i$'s value drawn from a distribution with CDF $F_i$ and PDF $f_i$. 

This game has an equilibrium where the outcome of the auction is the same as a standard second price auction without a censorship step and where the expected tip by each bidder to each proposer is \underline{t}. 

In particular, bidders $1$'s tipping strategy in equilibrium is given by:
\begin{align}
 t_1(v_1) = 0,\hspace{2cm} T_1(v_1) = 1
\end{align}
Bidder $0$'s offer strategy to the proposer based on their own value $v_0$ and the observed tips $(t,T)$ is  
\begin{align}
    z_0(t,T,v_0) = \begin{cases}
            0 & C(v_0) < mT\\
            T & C(v_0) \geq  mT.
    \end{cases}
\end{align}

Here $C(v_0)$ is buyer $0$'s net value to censoring bidder $1$'s bid (i.e., the difference their profit from censoring the competing bid and winning in the auction for free ($v_0$); and their expected surplus from competing with bidder $1$ in the auction).   
Finally, the proposer's strategies are to censor transactions with the following probabilities:
\begin{align}
    p(z,t,T) = \begin{cases}
        0 & z < t\\
        \left(\frac{z-t}{T-t}\right)^{\frac{1}{m-1}} & t \leq z < T\\
        1 & z \geq T
    \end{cases}
\end{align}
\end{proposition}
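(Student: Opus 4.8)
The plan is to verify that the proposed profile, together with the natural pooling beliefs, is a PBE, working from the proposers (the last movers) back to bidder~$1$. The organizing observation is that bidder~$1$'s tip $(0,1)$ is played independently of $v_1$, so it conveys no information: bidder~$0$'s posterior on $v_1$ remains the prior $F_1$, and each proposer's decision depends only on the observed tip $(t,T)$ and bribe $z$, not on any private value. Off path, if bidder~$1$ deviates to some $(t,T)$, I would let bidder~$0$ retain the prior belief on $v_1$; this is innocuous because, as the analysis shows, bidder~$0$'s continuation play never depends on the realized $v_1$ (only on his own $v_0$ and on the prior, through his competitive surplus). Throughout I write $m$ for the number of proposers and take $m\ge 2$, which the exponent $1/(m-1)$ requires.

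First I would pin down the proposers' strategy from an indifference condition. Fix an observed tip $(t,T)$ and a symmetric bribe $z$ offered to each proposer, and let $q=p(z,t,T)$ be the probability with which each excludes. Conditional on including, a tagged proposer is the unique includer exactly when the other $m-1$ proposers all exclude, an event of probability $q^{m-1}$; hence its expected payoff from including is $q^{m-1}T+(1-q^{m-1})t$, versus $z$ from excluding. Equating these yields $q^{m-1}=\tfrac{z-t}{T-t}$, i.e.\ the stated $p(z,t,T)$ on $[t,T)$; for $z<t$ inclusion strictly dominates (its payoff is at least $t>z$) so $q=0$, and for $z\ge T$ exclusion weakly dominates so $q=1$. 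This makes the symmetric mixed profile a Nash equilibrium of the proposers' subgame on each region.

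The crux, and the step I expect to require the most care, is bidder~$0$'s best response. Offering $z$ to each of the $m$ proposers, censorship succeeds only if all exclude, with probability $q^m$, while the expected bribe paid is $z\cdot\mathbb{E}[\#\text{excluders}]=mqz$. Substituting $z=t+(T-t)q^{m-1}$ gives expected cost $mqt+m(T-t)q^m$, so bidder~$0$'s expected payoff as a function of $q\in[0,1]$ is
\begin{equation*}
U_0(q)=S(v_0)+q^m\bigl[C(v_0)-m(T-t)\bigr]-mqt,
\end{equation*}
where $S(v_0)$ is his competitive second-price surplus and $C(v_0)$ his net value from censoring (as in the statement). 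Writing $A=C(v_0)-m(T-t)$ and $B=mt\ge0$, I would show $U_0$ is maximized at a corner: if $A\le0$ then $U_0'(q)=mAq^{m-1}-B\le0$, so $U_0$ decreases and the maximum is at $q=0$; if $A>0$ then $U_0''(q)=m(m-1)Aq^{m-2}>0$, so $U_0$ is convex and again maximized at an endpoint. Since $U_0(1)-U_0(0)=C(v_0)-mT$, bidder~$0$ censors (setting $q=1$, i.e.\ $z=T$) exactly when $C(v_0)\ge mT$ and otherwise offers $z=0$, matching $z_0$. On path $(t,T)=(0,1)$, so the threshold is $C(v_0)\ge m$; because values lie in $[0,1]$ we have $C(v_0)\le v_0\le1<m$, hence censorship never occurs. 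The remaining subtlety is ruling out \emph{asymmetric} bribe vectors $(z_1,\dots,z_m)$: I would argue that guaranteeing exclusion requires offering each proposer at least $T$ (else it includes with positive probability as the potential sole recipient of $T$), so the cheapest certain censorship costs $mT$ (consistent with the censorship resistance $\varphi(t,T)=mT$), and the convexity above shows interior or asymmetric mixing is dominated by a corner.

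Finally I would check bidder~$1$'s deviations, which is comparatively clean given the threshold rule. Facing it, bidder~$1$ submitting $(t,T)$ is fully censored with probability $\mathbb{P}\bigl[C(v_0)\ge mT\bigr]$ and otherwise included by all proposers (since bidder~$0$ then offers $z=0\le t$, forcing $q=0$), in which case bidder~$1$ pays the aggregate tip $mt$ and competes normally. Thus bidder~$1$'s payoff is $\mathbb{E}_{v_0}\bigl[\pi_1(v_1,v_0)\,\ind{C(v_0)<mT}\bigr]-mt\,\mathbb{P}\bigl[C(v_0)<mT\bigr]$. Raising $T$ only relaxes the indicator and is costless on path (the sole-includer tip $T$ is never paid when all include), so any $T$ with $mT>\sup_{v_0}C(v_0)$ is optimal and $T=1$ suffices since $\sup C\le1$; and $t=0$ strictly minimizes the tip term without affecting censorship, so $t=0$ is uniquely optimal. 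Hence $(0,1)$ is a best response, the censorship probability is zero, and the realized allocation and payments coincide with the standard second-price auction between $v_0$ and $v_1$, with each proposer receiving the low expected tip $\underline t$ (here equal to $t=0$), completing the verification.
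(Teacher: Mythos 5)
Your proposal is correct and follows essentially the same route as the paper: backward induction, the proposers' indifference condition pinning down $p(z,t,T)$, a corner-solution argument for bidder $0$'s bribe yielding the threshold $C(v_0)\ge mT$, and then bidder $1$'s best response $(t,T)=(0,1)$. Your reparametrization of bidder $0$'s objective by the censorship probability $q$ (giving the polynomial $q^m[C-m(T-t)]-mqt$ and an explicit sign/convexity case analysis plus endpoint comparison) is a cleaner and more rigorous version of the paper's blanket claim that the objective is convex in $z-t$, and your explicit verification of bidder $1$'s optimality and treatment of beliefs go beyond what the paper writes down, but the underlying argument is the same.
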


The above proposition may admittedly appear a little dense. We provide the following corollary regarding (on-path) equilibrium behavior. 
\begin{corollary}
    Consider the equilibrium proposed in Proposition \ref{prop:multproposer} for any $m \geq 2$. On path, bidder $0$ does not bribe the proposers and instead competes in the second-price auction. All bidders pay $0$ in tips on path. Equilibrium tips do not reveal bids.  
\end{corollary}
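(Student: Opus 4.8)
The plan is to read the three claims of the corollary directly off the on-path play induced by the equilibrium strategies of Proposition~\ref{prop:multproposer}. Since that proposition already certifies that these strategies constitute an equilibrium, no new best-response optimization is required; the entire content is in tracing the induced outcome, substituting the equilibrium tip $(t,T)=(0,1)$ into bidder $0$'s offer rule and the proposers' censoring rule.

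First I would establish that bidder $0$ never bribes on path. On path bidder $1$ posts the conditional tip $(t,T)=(0,1)$ for every value $v_1$, so the threshold in bidder $0$'s offer rule $z_0(t,T,v_0)$ is $mT = m$. Now $C(v_0)$ is bidder $0$'s net gain from censoring, namely the value $v_0$ of winning for free minus his nonnegative expected surplus from instead competing in the second-price auction; hence $C(v_0)\le v_0 \le 1 < m$ for every $m\ge 2$. The condition $C(v_0)\ge mT$ therefore fails for all $v_0$, so bidder $0$ selects $z_0=0$: he makes no positive bribe and instead competes in the auction, which is the first claim.

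Next I would show that no censorship occurs and that tips are zero. Substituting the on-path offer $z=0$ into the proposer's censoring probability at $(t,T)=(0,1)$ gives $p(0,0,1)=\left(\tfrac{0-0}{1-0}\right)^{1/(m-1)}=0$, so each proposer includes bidder $1$'s bid with probability one. Because $m\ge 2$, strictly more than one proposer includes, which triggers the low branch of the twin tip, so bidder $1$ pays $t=0$ rather than $T$; bidder $0$ likewise transfers nothing, since his equilibrium bribe is $z_0=0$. The auction is thus computed on the full set of bids exactly as a standard second-price auction, and total tips paid on path are $0$, giving the second claim. For the non-revelation claim I would simply note that the map $v_1 \mapsto (t,T)=(0,1)$ is constant in $v_1$, so the posterior over $v_1$ after observing the tip equals the prior: this is a pooling equilibrium and the tip carries no information about the bid, in sharp contrast with the single-block equilibrium of Proposition~\ref{prop:eqbm1}, where the strictly increasing tip function was perfectly revealing.

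The only step needing real care is the bound $C(v_0)\le 1$: one must unwind the definition of $C$ to confirm that censoring yields bidder $0$ at most his own value $v_0\le 1$ while the competing alternative always delivers nonnegative surplus, so that the large posted $T=1$ makes the coordinated bribe $mT=m$ strictly unaffordable as soon as $m\ge 2$. This is exactly the asymmetry—low expected tip $kt$ versus high censorship cost $kT$—that the conditional tip was designed to exploit. Everything else in the corollary is immediate substitution into the equilibrium strategies of Proposition~\ref{prop:multproposer}.
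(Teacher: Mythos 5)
Your proposal is correct and follows essentially the same route as the paper, which offers no separate formal proof of this corollary but justifies it in the surrounding discussion by exactly the reasoning you give: on path $T=1$ makes the censorship threshold $mT=m\ge 2$ exceed $C(v_0)\le v_0\le 1$, so no bribe is offered; all proposers then include, so only the low tip $t=0$ is ever paid; and the constant map $v_1\mapsto(0,1)$ is pooling, so tips are uninformative. Your tracing of the proposer's mixing rule $p(0,0,1)=0$ and the explicit bound on $C(v_0)$ merely make precise what the paper leaves informal.
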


Further, a careful study of the proof of Proposition \ref{prop:multproposer} shows that the Corollary continues to hold even when $n>1$. Therefore even $2$ concurrent block proposers restore the ``desired'' outcome relative to a single block proposer system. This is partly driven by concurrency of block proposers which removes the ``monopoly'' that they have over transaction inclusion, and partly by the conditional tip. The conditional tip allows bidder $1$ to get security via a high-tip offer conditional on inclusion by only a single proposer. This high tip offer makes it very expensive for bidder $0$ to attempt censor bidder $1$'s bid since they would need to pay $m$ times the high tip to censor the bid by all $m$ producers. Therefore, no bribe is offered. Further, this tip never needs to be paid, since both proposers find it weakly dominant to include the bid and pick up the low tip.\footnote{Note that $t = 0$ supports alternative asymmetric equilibria in the inclusion subgame where only a single proposer includes the transaction. In the broader game, these would correspond to the equilibria in the single proposer cases where $T$ substitutes for the old single dimensional tip. However, when $t$ is bounded away from 0, these equilibria disappear, since it is now strictly dominant for the proposer to include the bid in the inclusion subgame.} As an aside, note that this also restores equilibrium bid privacy, since tips no longer reveal bids. 

Finally, we should note that the conditional tipping logic we identify could also be applied to an auction over multiple blocks, achieving more censorship resistance than previously. However, it doesn't 
\section{Discussion}
Our results suggest that single proposer blockchains are not ideal for holding time sensitive auctions when the number of potential bidders is large. In our results, collusion arrangements are extremely profitable for the colluding bidder but only marginally profitable for the proposer. However, this is because we restrict the model to have one potential colluding bidder who can bribe the proposer. In reality, there are many possible colluding bidders, and only one proposer in each slot, the proposer could end up charging for the right to collude and extract a significant portion of the value that the colluding bidder gains from the arrangement. In fact, the predominant block building system, MEV-boost, can be thought of as a direct channel through which the proposer can sell the right to censor transactions to the highest bidder. This suggests that one driver of MEV is the proposer's right to determine inclusion. Previous work has focused on a different source: the proposer's right to order transactions within a block. From the position that proposer ordering power is the source, order agnostic mechanisms should solve MEV. But if censorship power is the source, these order agnostic mechanisms, including the second price auction we study here, could be just as susceptible to value extraction.

Another proposed source for MEV is the public nature of transactions in the mempool. The argument is, transactions in the mempool are sitting ducks, waiting to be front-ran.  It follows that, if transactions are encrypted while in the mempool, they will be less susceptible to MEV. But our results demonstrate that even when bids in the auction are encrypted, public tips may reveal private bids. 

As the previous paragraph suggests, cryptographic solutions do not immediately solve the censorship problem unless they also appropriately modify consensus to bind the proposer (cf our multiple concurrent proposer suggestion): for example consider an MPC based approach. One way to implement it would be to use an integer comparison MPC protocol to compare the private tips and choose the top k. Then, if the proposer chose not to include those top k, they could be penalized, or the block could be considered invalid by consensus. Note that this requires changing consensus as well as the mempool, and it requires multiple validators to participate in MPC, each of whom can add transactions to the process. In that way it is similar to the multiple concurrent block proposer proposer scheme we describe. However, in this scheme, the proposer also retains the option to accept a bribe rather than use the MPC protocol, i.e., they may accept a bribe that exceeds the expected value of the top k tips.

Implemented purely as a private mempool solution without changing consensus, the main impediment is enforcing the proposer’s commitment to this block inclusion strategy: i.e., MPC allows us to compute the k transactions with the highest tips in a private fashion, but does not bind the proposer to include these. The proposer remains free to censor any subset of these transactions if it is profitable for them to do so.

\subsection{Multiple Concurrent Proposers in the Real world}
There are projects trying to implement multiple concurrent proposers by appropriately modifying the Tendermint protocol.\footnote{See, e.g., \url{https://blog.duality.xyz/introducing-multiplicity/}.} Representatives from at least one major chain, have also mentioned concurrency as a goal going forward, in order to scale throughput and decrease latency from the user to the nearest proposer within a given slot.\footnote{See, e.g., \url{https://blog.chain.link/execution-and-parallelism-for-dag-based-bft-consensus/}.} The co-founder, and CEO of Solana even mentioned MEV resistance as a motivation for the desirability of multiple concurrent block proposers.\footnote{See \url{https://twitter.com/aeyakovenko/status/1584676110948012032}.} Our results suggest that proposer rents arising from the proposer’s temporary monopoly on inclusion shrink sharply under multiple concurrent proposers using our conditional tip logic. To see the order of magnitude, note that total payments by MEV-boost to proposers on Ethereum totaled \$245 million since Sept 2022 (the merge), see https://dune.com/arcana/mev-boost). This is quite large in comparison to the total value created by MEV and contributes to high transaction fees for users. That said multiple proposer protocols are not a panacea. Beyond the engineering difficulties themselves, we see a few other impediments. Firstly, ordering---with a single proposer, a canonical ordering of transactions within the block is simply the order that the leader put them in. Once we add multiple proposers, the ordering becomes less clear. Secondly, multiple proposers may be more susceptible to Distributed Denial of Service (DDOS) attacks: allowing every validator to submit transactions opens the network up to a DDOS attack where adversaries include many transactions to slow down the chain. Finally there are issues of redundancy and state bloat: the multiple proposer approach potentially leads to considerable redundancy within the slot (the same transaction may be included by many different proposers). 
\subsection{Future Directions}
From a theoretical perspective, this work leaves open questions of how censorship in on chain auctions might effect the equilibria of auctions with different assumptions about bidder valuations. For example a natural extension to our results would be to consider honest bidders with interdependent or common values. This may be a better model for on chain order flow auctions and collateral liquidation auctions. 

Our results are based on the assumption that a single bidder has been selected as the colluding bidder in advance, but an assumption that would track more closely to the realities of MEV-boost would be to have the right to collude be auctioned off after bids have been submitted. If the right to collude is auctioned off before the bids are submitted, then our results would still hold, since our result would simply be a subgame of the larger game being considered, and the result would be that proposers end up with a larger share of their monopoly rents; however, when the right to collude is auctioned off after transactions have been submitted, and therefore after bidders discover their types, the players who are willing to pay to collude are  more often those who value the item more. This could warp the equilibrium slightly. 

Outside of mechanism design, this work provides a strong theoretical justification for investigating multiple concurrent block proposer based consensus frameworks as a tool for MEV mitigation. Specifically, we identify conditional tipping as a powerful tool to combat censorship in situations where there are more than one block proposer. 

Indeed, strong censorship-resistance is beneficial to other systemically important smart contracts. Oracle feeds require timely inclusion guarantees to be useful, and censorship in order to manipulate on-chain financial markets is a serious concern. Hedging contracts such as options need to be censorship resistant so that they can be exercised, and conversely, the underwriter has an incentive to censor. Similarly, optimistic rollups rely on censorship resistance for their security, an attacker who wishes to manipulate an optimistic rollup only needs to censor the fraud proofs for a period of time. To combat this, optimistic rollups currently leave a long window for fraud proof submission before finalizing (e.g. 7 days). Stronger censorship resistance could allow them to achieve finality faster with the same security.

Another advantage is that proposer rents arising from the proposer’s temporary monopoly on inclusion shrink sharply under multiple concurrent proposers using our conditional tip logic. To see the order of magnitude, note that total payments by MEV-boost to proposers on Ethereum totaled \$245 million since Sept 2022 (the merge), see \url{https://dune.com/arcana/mev-boost}). This is quite large in comparison to the total value created by MEV and contributes to high transaction fees for users.

Another potential tool for combating censorship on chain, that we have not discussed, is a data availability layer. Instead of being submitted to a blockchain directly, bids could be submitted to a data availability layer, nodes could then compute the results of the auction based on whichever transactions were included on the data availability layer. This is similar to holding the auction directly on chain except that the nodes tasked with curating a data availability layer do not necessarily need to participate in consensus. The requirements for a data availability layer are weaker than those required for a full blockchain so it may be easier to integrate multiple proposer architecture on data availability layers than on blockchains themselves.

\newpage

\bibliographystyle{aer}
\bibliography{onchain-auctions}

@inproceedings{daian2020flash,
  title={Flash boys 2.0: Frontrunning in decentralized exchanges, miner extractable value, and consensus instability},
  author={Daian, Philip and Goldfeder, Steven and Kell, Tyler and Li, Yunqi and Zhao, Xueyuan and Bentov, Iddo and Breidenbach, Lorenz and Juels, Ari},
  booktitle={2020 IEEE Symposium on Security and Privacy (SP)},
  pages={910--927},
  year={2020},
  organization={IEEE}
}

@article{ferreira2022credible,
  title={Credible Decentralized Exchange Design via Verifiable Sequencing Rules},
  author={Ferreira, Matheus VX and Parkes, David C},
  journal={arXiv preprint arXiv:2209.15569},
  year={2022}
}

@article{kulkarni2023credible,
  title={Credibility and Incentives in Gradual Dutch Auctions},
  author={Kulkarni, Kshitij and Ferreira, Matheus VX and Chitra, Tarun},
  journal={arXiv},
  year={2023}
}

@article{kulkarni2022towards,
  title={Towards a theory of maximal extractable value i: Constant function market makers},
  author={Kulkarni, Kshitij and Diamandis, Theo and Chitra, Tarun},
  journal={arXiv preprint arXiv:2207.11835},
  year={2022}
}

@article{budish2015high,
  title={The high-frequency trading arms race: Frequent batch auctions as a market design response},
  author={Budish, Eric and Cramton, Peter and Shim, John},
  journal={The Quarterly Journal of Economics},
  volume={130},
  number={4},
  pages={1547--1621},
  year={2015},
  publisher={MIT Press}
}

@inproceedings{kelkar2020order,
  title={Order-fairness for byzantine consensus},
  author={Kelkar, Mahimna and Zhang, Fan and Goldfeder, Steven and Juels, Ari},
  booktitle={Annual International Cryptology Conference},
  pages={451--480},
  year={2020},
  organization={Springer}
}

@article{kelkar2021themis,
  title={Themis: Fast, strong order-fairness in byzantine consensus},
  author={Kelkar, Mahimna and Deb, Soubhik and Long, Sishan and Juels, Ari and Kannan, Sreeram},
  journal={Cryptology ePrint Archive},
  year={2021}
}

@article{akbarpour2020credible,
  title={Credible auctions: A trilemma},
  author={Akbarpour, Mohammad and Li, Shengwu},
  journal={Econometrica},
  volume={88},
  number={2},
  pages={425--467},
  year={2020},
  publisher={Wiley Online Library}
}

@article{szabo1997formalizing,
  title={Formalizing and securing relationships on public networks},
  author={Szabo, Nick},
  journal={First monday},
  year={1997}
}

@inproceedings{akbarpour2018credible,
  title={Credible Mechanisms.},
  author={Akbarpour, Mohammad and Li, Shengwu},
  booktitle={EC},
  pages={371},
  year={2018}
}

@inproceedings{choudhuri2017fairness,
  title={Fairness in an unfair world: Fair multiparty computation from public bulletin boards},
  author={Choudhuri, Arka Rai and Green, Matthew and Jain, Abhishek and Kaptchuk, Gabriel and Miers, Ian},
  booktitle={Proceedings of the 2017 ACM SIGSAC Conference on Computer and Communications Security},
  pages={719--728},
  year={2017}
}

@inproceedings{blass2018strain,
  title={Strain: A secure auction for blockchains},
  author={Blass, Erik-Oliver and Kerschbaum, Florian},
  booktitle={European Symposium on Research in Computer Security},
  pages={87--110},
  year={2018},
  organization={Springer}
}

@inproceedings{galal2019verifiable,
  title={Verifiable sealed-bid auction on the ethereum blockchain},
  author={Galal, Hisham S and Youssef, Amr M},
  booktitle={Financial Cryptography and Data Security: FC 2018 International Workshops, BITCOIN, VOTING, and WTSC, Nieuwpoort, Cura{\c{c}}ao, March 2, 2018, Revised Selected Papers 22},
  pages={265--278},
  year={2019},
  organization={Springer}
}

@misc{flashbots_2022, title={The future of MeV is suave: Flashbots}, url={https://writings.flashbots.net/the-future-of-mev-is-suave/#iii-the-future-of-mev}, journal={Flashbots Blog RSS}, author={Flashbots}, year={2022}, month={Nov}}

@inproceedings{milionis2022framework,
  title={A Framework for Single-Item NFT Auction Mechanism Design},
  author={Milionis, Jason and Hirsch, Dean and Arditi, Andy and Garimidi, Pranav},
  booktitle={Proceedings of the 2022 ACM CCS Workshop on Decentralized Finance and Security},
  pages={31--38},
  year={2022}
}

@inproceedings{ferreira2020credible,
  title={Credible, truthful, and two-round (optimal) auctions via cryptographic commitments},
  author={Ferreira, Matheus VX and Weinberg, S Matthew},
  booktitle={Proceedings of the 21st ACM Conference on Economics and Computation},
  pages={683--712},
  year={2020}
}

@misc{frankie_robinson_white_andy8052_2022, title={Gradual Dutch Auctions}, url={https://www.paradigm.xyz/2022/04/gda}, journal={www.paradigm.xyz}, publisher={Paradigm}, author={Frankie and Robinson, Dan and White, Dave and andy8052}, year={2022}, month={Apr}}

@inproceedings{lee2001efficient,
  title={Efficient public auction with one-time registration and public verifiability},
  author={Lee, Byoungcheon and Kim, Kwangjo and Ma, Joongsoo},
  booktitle={Progress in Cryptology—INDOCRYPT 2001: Second International Conference on Cryptology in India Chennai, India, December 16--20, 2001 Proceedings 2},
  pages={162--174},
  year={2001},
  organization={Springer}
}

@inproceedings{elkind2004interleaving,
  title={Interleaving cryptography and mechanism design: The case of online auctions},
  author={Elkind, Edith and Lipmaa, Helger},
  booktitle={Financial Cryptography: 8th International Conference, FC 2004, Key West, FL, USA, February 9-12, 2004. Revised Papers 8},
  pages={117--131},
  year={2004},
  organization={Springer}
}

@inproceedings{suzuki2003secure,
  title={Secure generalized vickrey auction using homomorphic encryption},
  author={Suzuki, Koutarou and Yokoo, Makoto},
  booktitle={Financial Cryptography: 7th International Conference, FC 2003, Guadeloupe, French West Indies, January 27-30, 2003. Revised Papers 7},
  pages={239--249},
  year={2003},
  organization={Springer}
}

@article{johnson2022concave,
  title={Concave Pro-rata Games},
  author={Johnson, Nicholas AG and Diamandis, Theo and Evans, Alex and de Valence, Henry and Angeris, Guillermo},
  year={2022}
}

@misc{buterin2015censorship, title={The Problem of Censorship}, howpublished={\url{https://blog.ethereum.org/2015/06/06/the-problem-of-censorship}}, journal={Ethereum Blog}, publisher={Ethereum Foundation}, author={Buterin, Vitalik}, year={2015}, month={Jun}}

@misc{buterin2021pbs, title={State of research: increasing censorship resistance of transactions under proposer/builder separation (PBS)}, howpublished={\url{https://notes.ethereum.org/@vbuterin/pbs_censorship_resistance}}, journal={Ethereum Blog},publisher={Ethereum Foundation}, author={Buterin, Vitalik}, year={2021}, month={Jun} }

@misc{monnot2022pepc, title={Unbundling PBS: Towards protocol-enforced proposer commitments (PEPC)}, howpublished={\url{https://ethresear.ch/t/unbundling-pbs-towards-protocol-enforced-proposer-commitments-pepc/13879?u=barnabe}}, journal={EthResear.ch}, publisher={Ethereum Foundation}, author={Monnot, Barnabe}, year={2022}, month={Oct}}

@article{wahrstatter2023blockchain,
  title={Blockchain Censorship},
  author={Wahrst{\"a}tter, Anton and Ernstberger, Jens and Yaish, Aviv and Zhou, Liyi and Qin, Kaihua and Tsuchiya, Taro and Steinhorst, Sebastian and Svetinovic, Davor and Christin, Nicolas and Barczentewicz, Mikolaj and others},
  journal={arXiv preprint arXiv:2305.18545},
  year={2023}
}

\newpage

\appendix
\section{Omitted Proofs}
\subsection{Proof of Proposition \ref{prop:eqbmn}}
\begin{proof}
We first need consider the tipping strategy $t(v)$. To do this we first work out the expected utility for player 1 when all other $n-1$ ``honest'' bidders  are bidding according to $t(\cdot)$ and bidder 0 is following the strategy above, i.e., censoring all the bids whenever the total tip doesn't exceed their value: 
\begin{align}\label{eqn:indirectutil}
    U(v, t) =  &\int_0^{v} \ldots \int_0^v \left(F_0(\sum_{j=2}^n t(v_j) + t) (v - \max (v_j) )\right) dv_2 \ldots dv_n  \\&- t\mathbb{E}\left[F_0( \sum_{j=2}^n t(v_j) + t)\right].\notag
\end{align}
Here the first term is expected profit of the buyer \emph{in the auction}, and the second term is the expected cost of the tip: tip times the probability that the bid is not censored and the tip is therefore charged. Note that since the buyer $0$'s value is distributed uniformly, so $F_0( \sum_{j=2}^n t(v_j) + t) = \sum_{j=2}^n t(v_j) + t$.\footnote{Formally, this is only true as long as $\sum_{j=2}^n t(v_j) + t<1 $, but our solution will be such that $\sum_1^n t(v_i) \leq 1$ for any profile of values $i$.}
\begin{align*}
    \implies U(v, t) =  &\int_0^{v} \ldots \int_0^v \left((\sum_{j=2}^n t(v_j) + t) (v - \max (v_j) )\right) dv_2 \ldots dv_n  \\ &- t\mathbb{E}\left[(\sum_{j=2}^n t(v_j) + t)\right]. \\
\intertext{Differentiating with respect to t}
 \frac{\partial U(v,t)}{\partial t} =&\int_0^{v} \ldots \int_0^v \left( (v - \max (v_j) )\right) dv_2 \ldots dv_n  - \mathbb{E}\left[(\sum_{j=2}^n t(v_j))\right]  - 2t. \\
\intertext{Therefore optimality implies that $t(v)$ must solve:}
 \implies & \frac{v^n}{n} - \mathbb{E}\left[(\sum_{j=2}^n t(v_j))\right] - 2t(v) \leq 0.
\intertext{with the inequality binding when $t (v) >0$.}
\intertext{Let us denote $\mathbb{E}[ t(v_j)]$ by c.}
 \implies & \frac{v^n}{n} - (n-1)c - 2t =0.
\end{align*}
Therefore we have that: 
\begin{align*}
    t(v) = \begin{cases}
        0 & v < \utheta,\\
        \frac12 \left(\frac{v^n}{n} - (n-1)c \right) & \text{o.w. }
    \end{cases}
\end{align*}
where $\utheta = ((n^2-n)c)^{1/n}$. 
Finally, since $\mathbb{E}[ t(v_j)]=c$, it must be that 
\begin{align*}
    &\int_{\utheta}^1  \frac12 \left(\frac{v^n}{n} - (n-1)c \right)dv =c, \\
\implies & \frac{1}{n(n+1)}(1-\utheta^{n+1}) - (n-1) c (1-\utheta) = 2c.  
\end{align*}
Putting together we have that 
\begin{align}\label{silly}
    \frac{\utheta^{n+1}}{n(n+1)} + (n+1) c - (n-1)c \utheta = \frac{1}{n(n+1)}. 
\end{align}
Recall that $\utheta = ((n^2-n)c)^{1/n}  \implies c = \frac{\utheta^n}{n(n-1)}$. 
Substituting this into \eqref{silly} we have 
\begin{align}
 &   \frac{\utheta^{n+1}}{n(n+1)} + (n+1) \frac{\utheta^n}{n(n-1)} - (n-1) .\frac{\utheta^n}{n(n-1)}.  \utheta = \frac{1}{n(n+1)}. \notag\\
 \implies & (n+1) \frac{\utheta^n}{n(n-1)} - \frac{\utheta^{n+1}}{(n+1)} -\frac{1}{n(n+1)} =0,\label{silly2}
\end{align}
as desired.

Further by observation, we have that the assumption of Lemma \ref{lem:tips} is satisfied, i.e., $t(v) \leq \frac{v}{n}$ and therefore we have from Lemma \ref{lem:tips} that Bidder $0$'s bribing strategy is a best response to these tips. 
\end{proof}

As an aside, we note that the the first step in the proof, i.e. noting that $F_0( \sum_{j=2}^n t(v_j) + t) = \sum_{j=2}^n t(v_j) + t$ under the uniform distribution, allows to make the  $\mathbb{E}[F_0( \sum_{j=2}^n t(v_j) + t)]$ in the first expression analytically tractable. This is the reason we are unable to generalize beyond the uniform distribution for bidder $0$.

\subsection{Proof of Proposition \ref{prop:underlinev}}
First suppose $\utheta^n = 1/\sqrt{n}$. Substituting in to the left hand side of \eqref{silly2}, we have 
\begin{align*}
    \frac{n+1}{n(n-1)\sqrt n} - \frac{1}{(n+1)(\sqrt{n})^{\frac{n+1}{n}}} - \frac{1}{n(n+1)}.
\end{align*}
Note that the first term is $O(1/ n^{3/2})$, and the latter two terms are smaller. So the left hand side is always positive. 

Therefore we have that $\utheta^n \leq \frac{1}{\sqrt{n}}$ (since the left hand side of \eqref{silly2} goes from negative to positive and is 0 at the unique root). 

This in turn implies that $nc \leq  \frac{1}{(n-1)\sqrt{n}}$. 

By a similar argument we can argue that $\utheta^n > 1/n$. To see this, note that substituting into left hand side of \eqref{silly2}, we have
\begin{align*}
     &\frac{n+1}{n^2(n-1)} - \frac{1}{(n+1)(n)^{\frac{n+1}{n}}} - \frac{1}{n(n+1)},\\
     =& \frac{(n+1)^2 - n(n-1)}{n^2(n-1)(n+1)} - \frac{1}{(n+1)(n)^{\frac{n+1}{n}}},\\
     =& \frac{3n+1}{n^2(n-1)(n+1)} - \frac{1}{(n+1)(n)^{\frac{n+1}{n}}}.
\end{align*}
Note that the first term is $O(1/n^3)$, while the second term is $\approx O(1/n^2)$, so for n large this will be negative. 

\subsection{Proof of Proposition \ref{prop:eqbmrn}}
Repeating the calculation of the proof of Proposition \ref{prop:eqbmn}, we have that a bidder of value $v$' indirect utility from a tip of $t$ given all other $n-1$ bidders tip according to $t(\cdot)$ is given by 
\begin{align*}
    U(v, t) =  &\int_r^{v} \ldots \int_r^v \left(P_0(v_0 < r+ \sum_{j=2}^n t(v_j) + t) (v - \max (v_j) )\right) dv_2 \ldots dv_n  \\ &+ P_0(v_0 < r+t ) r^{n-1} (v-r) - t \mathbb{E}\left[F_0( r+ \sum_{j=2}^n t(v_j) + t)\right] t.
\intertext{Again, the buyer $0$'s value is distributed uniformly, so $F_0( r+\sum_{j=2}^n t(v_j) + t) = r+ \sum_{j=2}^n t(v_j) + t$.}
    \implies U(v, t) =  &\int_r^{v} \ldots \int_r^v \left((r+\sum_{j=2}^n t(v_j) + t) (v - \max (v_j) )\right) dv_2 \ldots dv_n  \\&+ (r+t) r^{n-1} (v-r) - t \mathbb{E}\left[(r+\sum_{j=2}^n t(v_j) + t)\right] . 
\end{align*}
Differentiating with respect to t, we have
\begin{align*}
\frac{\partial U(v,t)}{\partial t} &\int_r^{v} \ldots \int_r^v \left( (v - \max (v_j) )\right) dv_2 \ldots dv_n + r^{n-1} (v-r) - \mathbb{E}\left[r + (\sum_{j=2}^n t(v_j))\right] - 2t.
 \end{align*}
 The first two terms are just the interim expected surplus of a bidder with value $v$ in a second price auction with n bidders and reserve price $r$, which by revenue equivalence is the integral of the allocation of lower types, i.e.,
 \begin{align*}
  &\int_r^{v} \ldots \int_r^v \left( (v - \max (v_j) )\right) dv_2 \ldots dv_n + r^{n-1} (v-r) \\=& \int_r^v \theta^{n-1} d\theta,\\
 =& \frac{v^n}{n} - \frac{r^n}{n}.
\intertext{Therefore, substituting back in, optimality of $t(\cdot)$ requires that }
&\frac{v^n}{n} - \frac{r^n}{n} - r - \mathbb{E}\left[\sum_{j=2}^n t(v_j)\right] - 2t(v) \leq 0, 
\end{align*}
with equality whenever $t(v)>0$. 
Therefore we have that:
\begin{align*}
    &t(v) = \begin{cases}
             \frac12 \left(\frac{v^n}{n} - (n-1) c_r - (r+  \frac{r^n}{n}) \right) & v \geq  (n( (n-1) c_r + (r+  \frac{r^n}{n}))^{1/n}\\
             0 & \text{ otherwise}
            \end{cases}\\
\text{where }    &c_r  = \mathbb{E}[t(v)].       
 \end{align*}
 Therefore we have  $t(v) = \frac12 \left(\frac{v^n}{n} - (n-1) c_r - (r+  \frac{r^n}{n})\right) $ when $ v > \underline{v}$ and $0$ otherwise. Recall that when $r=0$, $t(v) = \frac12 \left(\frac{v^n}{n} - (n-1) c\right) $. Therefore $c_r< c$ and $ (n-1) c_r - (r+  \frac{r^n}{n}) > n-1 c.$

 Therefore the equilibrium tipping function $t(v)$ when $r>0$ are first order stochastically dominated by the tipping function when $r=0$. We already showed in Proposition \ref{prop:underlinev} that the tipping function when $r=0$ as a function of $n$  implies that total tips collapse to $0$ at a super-linear rate. Therefore that continues to be the case here, and the Proposition follows.

\subsection{Proof of Proposition \ref{prop:eqbmngen}}
Repeating the calculation of the Proof of Proposition \ref{prop:eqbmn} but assuming that the bidders are now distributed according to some distribution $F,$ an taking first order conditions, we must have that  
\begin{align*}
    \int_0^v F^{n-1}(v) dv - c - 2 t(v) \leq 0 
\end{align*}
with it binding whenever $t(v) >0$. 

So we must have:
\begin{align*}
    t(\theta) = \begin{cases}
        0 & v \leq \utheta\\
        \frac12 \left( \int_0^v F^{n-1}(v) dv - (n-1) c\right) &\text{o.w.}
    \end{cases}
\end{align*}
where $c, \utheta$ jointly solve
\begin{align*}
    &\int_0^{\utheta} F^{n-1}(v) dv = (n-1)c,\\
    &\int_0^1 t(v) dF(v) = c. 
\end{align*}
Note that the latter equation can be written as:
\begin{align*}
    &\int_{\utheta}^1 \frac12 \left( \int_0^v F^{n-1}(v) dv - (n-1) c\right) dF(v)=c\\
    \implies& \int_{\utheta}^1 \left( \int_0^v F^{n-1}(v) dv - (n-1) c\right) dF(v) = 2c,\\
    \implies & \int_{\utheta}^1 \left( \int_0^v F^{n-1}(v) dv \right) dF(v) = \left(2 + (n-1) (1-F(\utheta))\right)c,
\intertext{writing $\int_0^v F^{n-1}(v) dv = S(v)$ and doing integration by parts}
\implies & S(1) - S(\utheta) F(\utheta)  -\int_{\utheta}^1 F^n(v) dv = \left(2+ (n-1)(1-F(\utheta))\right) \frac{S(\utheta)}{n-1},\\
\implies & S(1) - \int_{\utheta}^1 F^n(v) dv =  \frac{n+1}{n-1} S(\utheta),
\end{align*}
as desired. 

Note that under Assumption $1$, \ref{ass:1}, this implies that the assumption of Lemma \ref{lem:tips} continues to be satisfied, i.e., $t(v) \leq \frac{v}{n}$ and therefore we have from Lemma \ref{lem:tips} that Bidder $0$'s bribing strategy is a best response to these tips.

\subsection{Proof of Proposition \ref{prop:eqbmnk}}
The proof follows from the Proof of Proposition \ref{prop:eqbmn}. To see this, note that if bidder $0$ is distributed  $U[0,\kappa]$, given the conjectured strategies of $1$ to $n$, we have the total tip is always smaller than the highest possible value of bidder $0$. Therefore the proof continues to hold as written: bidder $0$ being distributed $U[0,\kappa]$ changes (\ref{eqn:indirectutil}) by a multiplicative constant $\frac{1}{\kappa}$, and therefore this does not affect the optimization. 

\subsection{Proof of Proposition \ref{prop:multproposer}}
\begin{proof} 
We proceed by backward induction. First consider the subgame where tips $(t,T)$, and bribe $z$ have already been offered. Given that all other proposers are censoring with probability $p$. The expected utility for proposer $i$ as a function of censoring probability $p_i$ is given by:
\begin{align*}
    &p_i z + (1-p_i)(p^{m-1})T + (1-p_i)(1-p^{m-1})t.
\intertext{Differentiating this with respect to $p_i$ we have} 
    &z -p^{m-1}T - t + p^{m-1}t.
\intertext{Therefore for a symmetric mixed equilibrium of the subgame it must be that}
   &0  =  z - t -(T-t)p^{m-1}\\
   \implies  &p =  \left(\frac{z-t}{T-t}\right)^\frac{1}{m-1}
\intertext{Given this, the probability of successfully censoring the bid for a given bribe $z \in [t,T]$ is} 
    &p^m = \left( \frac{z-t}{T-t}\right)^\frac{m}{m-1}
\intertext{If the value of censorship to the bribing bidder is $C$ relative to the no censorship case, then the expected utility of the bribing bidder relative to the no censorship case as a function of $z$ is}
   & C p^m - mzp = C \left( \frac{z-t}{T-t} \right)^{\frac{m}{m-1}} - m z \left(\frac{z-t}{T-t}\right)^\frac{1}{m-1}
\intertext{Define $u \equiv \frac{1}{T-t}^{1/m-1}$, and $q \equiv z-t$. Substituting in to the latter expression, we have }
    &u^m C q^{\frac{m}{m-1}} - u\left(mt q^{\frac{1}{m-1}} + mq^{\frac{m}{m-1}}\right)
\end{align*}
Note that this is convex in $q$, and therefore the solution is to either tip $t$ (which effectively is the same as tip $0$), or tip $T$. Finally, note that $C \leq v_0$ by definition.

Given this, the strategy of $(t,T) = (0, 1)$ is a best response for bidder $1$ since it will result in bidder $0$ offering a bribe of $0$. 
\end{proof}

\end{document}